\crefname{figure}{Figure}{Figures}
\crefname{section}{Section}{Sections}
\crefname{subsection}{{Section}}{{Sections}}
\crefname{subsubsection}{{Section}}{{Sections}}
\crefname{appendix}{{Appendix}}{{Appendices}}
\crefname{equation}{{Equation}}{{Equations}}
\crefname{remark}{{Remark}}{{Remarks}}
\crefname{theorem}{{Theorem}}{{Theorems}}
\crefname{definition}{{Definition}}{{Definitions}}
\crefname{lemma}{{Lemma}}{{Lemmas}}
\crefname{proposition}{{Proposition}}{{Propositions}}
\crefname{corollary}{Corollary}{Corollaries}
\crefname{table}{{Table}}{{Tables}}
\crefname{figure}{Figure}{Figures}
\crefname{section}{Section}{Sections}
\crefname{equation}{Equation}{Equations}
\crefname{remark}{Remark}{Remarks}
\crefname{theorem}{Theorem}{Theorems}
\crefname{definition}{Definition}{Definitions}
\crefname{lemma}{Lemma}{Lemmas}
\crefname{proposition}{Proposition}{Propositions}
\crefname{table}{Table}{Tables}
\newtheorem{problem}{Problem}
\newcommand{\mybf}[1]{\textbf{\sffamily{#1}}}
\newcommand{\R}{\mathbb{R}}
\newcommand{\X}{{\mathbb X}}
\newcommand{\cC}{\mathcal{C}}
\newcommand{\cD}{\mathcal{D}}
\newcommand{\cU}{\mathcal{U}}
\newcommand{\Open}{\mathbf{Open}}
\newcommand{\Set}{\mathbf{Set}}
\newcommand{\Top}{\mathbf{Top}}
\newcommand{\colim@}[2]{%
  \vtop{\m@th\ialign{##\cr
    \hfil$#1\operator@font colim$\hfil\cr
    \noalign{\nointerlineskip\kern1.5\ex@}#2\cr
    \noalign{\nointerlineskip\kern-\ex@}\cr}}%
}
\newcommand{\colim}{%
  \mathop{\mathpalette\colim@{\rightarrowfill@\textstyle}}\nmlimits@
}
\renewcommand{\phi}{\varphi}
\newcommand{\inv}{^{-1}}
\newcommand{\1}{\mathbbm{1}}
\renewcommand{\Im}{\mathrm{Im}}
\newcommand{\mycomment}[1]{}
\newcommand*{\Parallelogramr}[1][]{
  \pgfpicture\pgfsetroundjoin
    \pgftransformxslant{.6}
    \pgfpathrectangle{\pgfpointorigin}{\pgfpoint{.60em}{.65em}}
    \pgfusepath{stroke,#1}
  \endpgfpicture}
\newcommand*{\Parallelograml}[1][]{
  \pgfpicture\pgfsetroundjoin
    \pgftransformxslant{-.6}
    \pgfpathrectangle{\pgfpointorigin}{\pgfpoint{.60em}{.65em}}
    \pgfusepath{stroke,#1}
  \endpgfpicture}
\newcommand{\triangled}{\raisebox{\depth}{$\bigtriangledown$}}
\newcommand{\triangleu}{\bigtriangleup}
\newcommand{\Lpl}{L_{\scaleobj{.7}{\Parallelograml}}}
\newcommand{\Lpr}{L_{\scaleobj{.7}{\Parallelogramr}}}
\newcommand{\Ltd}{L_{\bigtriangledown}}
\newcommand{\Ltu}{L_{\bigtriangleup}}
\renewcommand{\phi}{\varphi}
\title{Towards an Optimal Bound for the Interleaving Distance on Mapper Graphs 
\thanks{This research was partially supported by a grant from the Department of Energy (DOE) DE-SC0021015 and grants from the National Science Foundation (NSF) DMS-2301361, CCF-1907591, CCF-2106578, CCF-2142713, CCF-2106672, and CCF-1907612.}
}
\titlerunning{Optimal Bounds for Mapper Graph Interleaving}
\author{Erin Wolf Chambers}{ University of Notre Dame, USA}{echambe2@nd.dot.edu}{https://orcid.org/0000-0001-8333-3676}{}
\author{Ishika Ghosh}{Michigan State University, USA}{ghoshis3@msu.edu}{https://orcid.org/0000-0002-7901-5912}{}
\author{Elizabeth Munch}{Michigan State University, USA}{muncheli@msu.edu}{https://orcid.org/0000-0002-9459-9493}{}
\author{Sarah Percival}{University of New Mexico, USA}{spercival@msu.edu}{https://orcid.org/0000-0003-1024-4618}{}
\author{Bei Wang}{University of Utah, USA}{beiwang@sci.utah.edu}{https://orcid.org/0000-0002-9240-0700}{}
\date{}
\keywords{Mapper graphs, geometric graphs, interleaving distance, integer linear programming}
\begin{document}

\maketitle

\begin{abstract}
Mapper graphs are widely used tools in topological data analysis and visualization. They can be understood as discrete approximations of Reeb graphs, providing insight into the shape and connectivity of complex data. Given a high-dimensional point cloud together with a real-valued function defined on it, a mapper graph summarizes the induced topological structure: each node represents a local neighborhood, and edges connect nodes whose corresponding neighborhoods overlap. Our focus is the interleaving distance for mapper graphs, arising as a discretized analogue of the interleaving distance for Reeb graphs—a quantity known to be NP-hard to compute. This distance measures how similar two mapper graphs are by quantifying how much they must be ``stretched'' to be made comparable. Recent work introduced a loss function that gives an upper bound on this distance. The loss evaluates how far a given collection of maps, called an assignment, is from being a true interleaving. Importantly, it is computationally tractable, offering a practical way to bound the distance, however the quality of the bound is dependent on the choice of assignment.
In this paper, we develop the first framework for bounding the interleaving distance on mapper graphs. We present the bound in two ways: first, by formulating an integer linear program (ILP) that determines whether an $n$-interleaving exists for a given $n$; and second, by constructing an ILP that identifies an assignment with minimal loss for that $n$.
We also evaluate the method on small examples where the interleaving distance is known, and on benchmark and simulated datasets, demonstrating the utility of the approach for classification tasks based on mapper graphs.
% To assess the effectiveness, we evaluate the method on small mapper graphs with known interleaving distances and show that the optimized upper bound coincides with the true distance in these cases. We further demonstrate the utility of the approach on both benchmark and simulated datasets by computing all pairwise optimal losses for the resulting mapper graphs and leveraging these distance bounds for classification.

\end{abstract}

\section{Introduction}
\label{sec:introduction}

Mapper graphs \cite{Singh2007} have become an increasingly popular tool in topological data analysis (TDA), generating substantial interest in both theory \cite{alvarado2024, Brown2020, Carriere2017b} and practice \cite{Purvine2023, Rathore2023, Rizvi2017, Saggar2018, zhou2023comparing}. 
Their construction yields a graphical representation of the data as follows. Let $\chi$ be a dataset equipped with a distance function $d \colon \chi \times \chi \to \R$ and a filter function $f \colon \chi \to \mathbb{R}$. Choose a cover $\cU$ of the image $f(\chi)$. 
The mapper graph is defined by assigning a vertex to each cluster obtained from the restriction of the dataset to the preimage $f^{-1}(U)$ for each $U \in \cU$. An edge is drawn between two vertices whenever the corresponding clusters have nonempty intersection.
In the continuous setting, one begins with a topological space $\X$ and a function $f \colon \X \to \R$, and considers the inverse images of the sets in $\cU$. The mapper graph is then defined as the nerve of the connected components of the inverse cover $\{f^{-1}(U) \mid U \in \cU\}$; see \cref{fig:mapper} for an illustration. From this perspective, the Mapper graph may be regarded as a discretization of a Reeb graph \cite{Reeb}, although this broader generalization is not needed in the present work.

\begin{figure}
\centering
\includegraphics[width=0.7\textwidth]{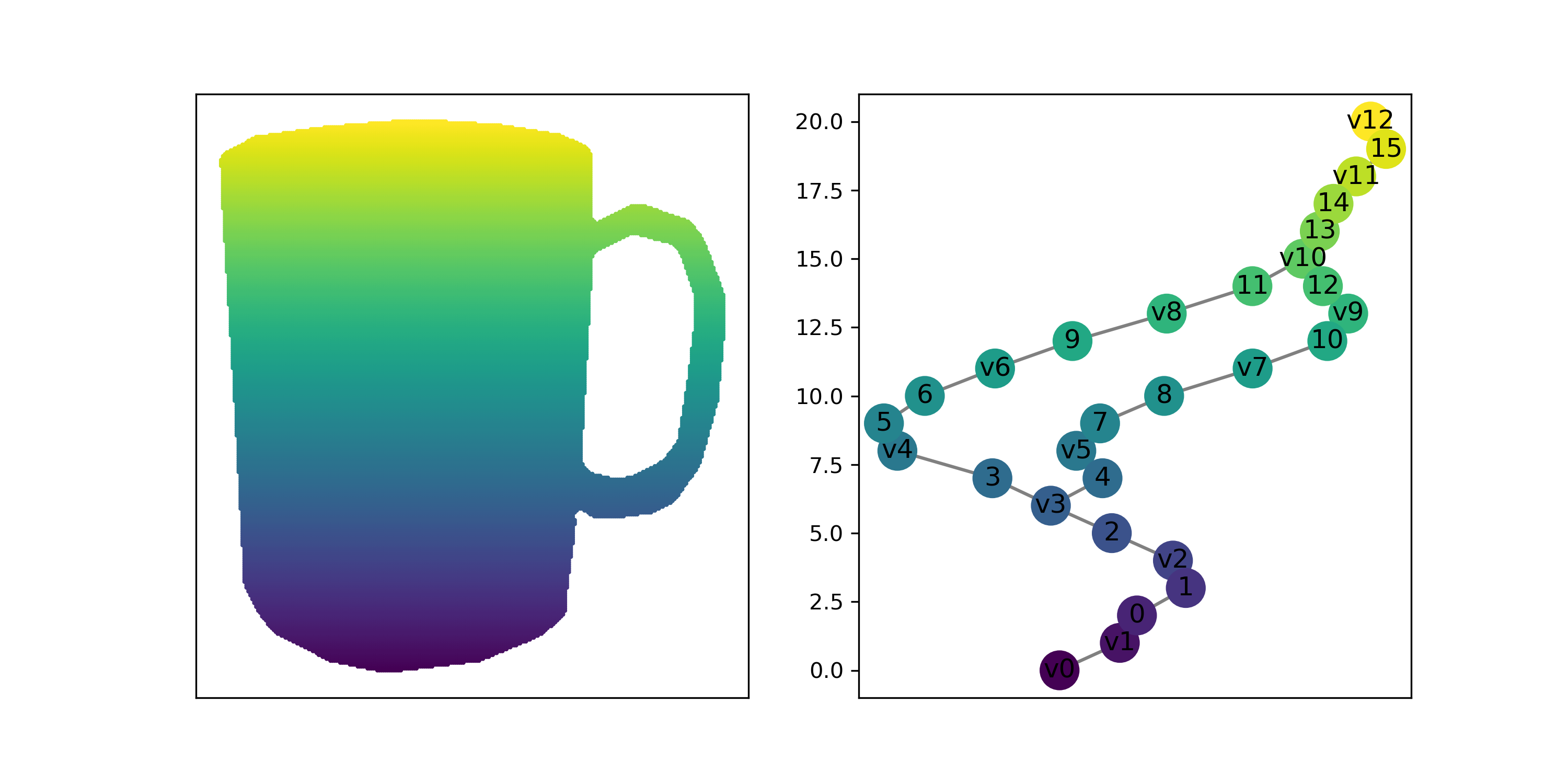}
\vspace{-2mm}
\caption{An example of a 2D input space (a \textit{cup} from the MPEG-7 dataset) together with the corresponding mapper graph, computed using a filtration function given by the pixel $y$-coordinate, is shown on the right. Vertices are labeled for reference only and do not affect the function values.}
\label{fig:mapper}
\vspace{-4mm}
\end{figure} 

A key limitation of mapper graphs is that, in practice, their use has been largely confined to qualitative assessment and exploratory data analysis. This is primarily because quantitative evaluation of mapper graph quality has so far relied mostly on theoretical guarantees. In particular, comparing an output mapper graph with a ground-truth mapper graph requires a computable similarity measure. Although several distances between mapper and Reeb graphs have been proposed (see \cref{sec:RelatedWork}), most are difficult to apply in practice due to prohibitive computational complexity. Here, we focus on the interleaving distance for mapper graphs. This construction arises from a discretization of the interleaving distance for Reeb graphs \cite{deSilva2016}, itself adapted from translating interleavings for algebraic and topological representations \cite{bubenik2015metrics, Chazal2009, deSilva2018} to graph-based objects. However, like most available metrics, it is NP-hard to compute in general \cite{bjerkevik2017computational, bubenik2015metrics}. 
 
In this paper, we use the framework given in \cite{chambers2023bounding} to bound the interleaving distance.
Formally, we encode our mapper inputs as cosheaves (see e.g.~\cite{curry2013sheaves, riehl2017category}) of the form $F\colon \Open(\cU) \rightarrow \Set$ for a given cover $\cU$.
We can define a thickening operation on the elements of $\Open(\cU)$,  which amounts to a discretized version of thickening of open intervals in $\R$, and then define the smoothing of a functor $F^n$ by post-composing with this thickening operation. 
Then an $n$-interleaving is a pair of natural transformations $\phi\colon F \to G^n$ and $\psi\colon G \to F^n$ that commute with each other up to this thickening operation. 
The interleaving distance $d_I(F,G)$ is given by the minimum $n$ for which such an interleaving can be found. 

We build on the theoretical work of \cite{chambers2023bounding}, which provides a bound on the interleaving distance for mapper graphs given a construction called an \emph{unnatural transformation}; i.e.~a collection of maps with the structure of a natural transformation but without promised commutativity.
Specifically, we call a pair $(\phi,\psi)$ an $n$-assignment if it satisfies the setup of an $n$-interleaving but potentially does not commute.
In this paper, we encode the information of these assignments in binary matrices. 
Our first major contribution is to show that verifying whether a given assignment constitutes an interleaving, as well as computing the terms appearing in the bound of \cite{chambers2023bounding}, can be reduced to a collection of matrix multiplication problems. These formulations are summarized in \cref{tab:theList} and established in \cref{thm:interleaving,thm:LossEquivMatrices}.
Then, treating the entries in the unnatural transformation matrices as variables, we set up an integer linear program (ILP) which we can pass to publicly available solvers. 
The result is the first publicly available code which provides bounds for the interleaving distance between mapper graphs, in the package \texttt{ceREEBerus} \cite{cereeberus}. 
Further, we give experiments, with code available on Github \cite{losspapergithubexperiments}, using \texttt{ceREEBerus} to compute the distance between mapper graphs in both synthetic settings and for larger benchmark datasets (See \cref{sec:experiments}).
While the ILP formulation does not guarantee that the computed bound is optimal (and hence equal to the true interleaving distance), in all experiments where the true distance is known, it recovers the correct interleaving distance.
In summary, our contributions include: 
\begin{itemize}
\item \mybf{Matrix formulation of interleavings.} We encode $n$-assignments as binary matrices and show that both interleaving verification and the computation of bounds from \cite{chambers2023bounding} reduce to a collection of matrix multiplication problems. See \cref{sec:AllMatrices}.

\item \mybf{Algorithmic characterization.} We provide explicit matrix-based criteria for checking interleavings and evaluating the associated bounds, with formal guarantees established in \cref{thm:interleaving,thm:LossEquivMatrices}.

\item \mybf{Optimization framework.} By treating the entries of the assignment matrices as variables, we formulate the problem as an ILP, enabling the use of off-the-shelf solvers. See \cref{sec:ILP}.

\item \mybf{Software contribution.} We include an implementation in the Python package \texttt{ceREEBerus} \cite{cereeberus}, the first publicly available implementation for computing bounds on the interleaving distance between mapper graphs. 

\item \mybf{Empirical validation.} We evaluate our approach on both synthetic and benchmark datasets, and in all experiments where the true interleaving distance is known, our method recovers the correct value despite the lack of optimality guarantees.
See \cref{sec:experiments}.
\end{itemize}

\section{Related Work}
\label{sec:RelatedWork} 

The mapper graph may be regarded as a discretization of the Reeb graph \cite{Reeb}, which captures the evolution of connected components of level sets of a topological space. Since a Reeb graph is, in most settings, a graph equipped with a function, one can compare Reeb graphs using standard graph distances (e.g., graph edit distance \cite{Gao2010}) by ignoring the function. However, given the topological nature of the construction, it is more desirable to employ a distance that accounts for the function.

There is now an extensive family of distances between Reeb graphs, including 
the interleaving distance \cite{Chambers2021,deSilva2016}, 
the functional distortion distance \cite{Bauer2013}, 
the universal distance \cite{Bauer2022,Bauer2020}, 
the functional contortion distance \cite{Bauer2022}, 
the bottleneck distance between the relevant persistence diagrams \cite{Carriere2017}, 
and graph edit distances \cite{Bauer2016, Bauer2020}. 
See \cite{Bollen2021,Yan2021a} for surveys. 
Work employing the interleaving distance---especially for the analysis of mapper graphs---includes convergence results showing that mapper graph converges to the Reeb graph \cite{Brown2020,Munch2016,Oulhaj2025}.
However, those works convert a mapper graph into a Reeb graph and use the continuous distance, whereas we directly construct an interleaving distance between the mapper graphs themselves.
There is also extensive work understanding how parameter choices affect the resulting mapper graph \cite{Carriere2018,Carriere2017b} and automatic parameter tuning~\cite{alvarado2025gmapper,chalapathi2021adaptive}. 

Beyond Reeb and mapper graphs, the interleaving distance appears broadly throughout the TDA literature across a wide range of input objects, originating with its introduction for persistence modules \cite{Chazal2009}. 
It has since expanded to include 
multiparameter persistence modules \cite{Berkouk2021,Lesnick2015}, 
filtered spaces \cite{Blumberg2023}, 
merge trees \cite{Beurskens2023,Morozov2013}, 
labeled Reeb graphs \cite{Lan2024},
formigrams \cite{Kim2023a}, 
labeled merge trees \cite{Curry2022,Gasparovic2024,Munch2019, Yan2019}, 
and more general categorical constructions \cite{Botnan2020,Cruz2019,deSilva2018,Kim2023,Meehan2017,Scoccola2020}.
Yet in many such settings, the interleaving distance is NP-hard to compute \cite{bjerkevik2020computing,bjerkevik2017computational}, resulting in a metric that is well-studied in theory but rarely used in practice. 

Among these works, most do not provide implementations, though there are a few notable exceptions. The interleaving distance for labeled merge trees \cite{Munch2019} lies in P, and implementations are available (e.g., \cite{Yan2019}). In \cite{Curry2022}, labelings derived from Gromov--Wasserstein couplings are used to bound the interleaving distance between unlabeled merge trees. 
In \cite{Chambers2025}, the addition metric measure space decoration on the Reeb graph is used to define a distance using the Gromov-Hausdorff distance.
Optimal transport--type distances have also been explored for mapper graphs by modeling them as hypergraphs~\cite{zhou2023comparing}, providing a practical alternative. Finally, \cite{Pegoraro2021} employs an ILP, based on the merge-tree-specific formulation of \cite{FarahbakhshTouli2019}, to obtain bounds on the merge tree interleaving distance. As with mapper graphs in this paper, existing algorithms for merge tree interleavings yield bounds rather than guaranteed exact computations.

\section{Background}
\label{sec:background}

We now present the necessary background for this paper, largely following the terminology introduced in~\cite{chambers2023bounding}.

\subsection{Functors, Cosheaves and Mapper Graphs}
\label{sec:CategoryTheory}

We work in a category-theoretic setting and provide a brief overview of the necessary definitions, with more details provided in \cite{curry2013sheaves,riehl2017category}. 

A \emph{category} $\cC$ is a collection of objects $X,Y,Z,\dots$ along with morphisms $f,g,h,\dots$ such that 
\begin{enumerate}
    \item each morphism $f$ has a designated domain $X$ and codomain $Y$,
    \item every object has an identity morphism $\1_X\colon X \to X$,
    \item any pair of morphisms $f\colon X \to Y$ and $g\colon Y \to Z$ has a composite morphism $g \circ  f\colon X \to Z$ which satisfies an identity axiom, where $f\colon X \to Y = \1_Y \circ  f = f \circ \1_X$, and an associativity axiom, where $h\circ (g\circ f) = (h\circ g)\circ f$.
  \end{enumerate}
Many settings in mathematics can be viewed as categories: $\Set$ is the category whose objects are sets and morphisms are set maps; 
$\Top$ is the category whose objects are topological spaces and morphisms are continuous functions;
and, for a given topological space $X$, $\Open(X)$ is the category whose objects are the open sets and morphisms are given by inclusion. A category is \emph{small} if the collections of objects and morphisms are both sets.

Given two categories $\cC$ and $\cD$, a \emph{functor} $F\colon\cC \to \cD$ maps each object $x \in \cC$ to an object $F(x) \in \cD$ and each morphism $f\colon X \to Y$ to a morphism $F[f]\colon F(X) \to F(Y)$ that satisfies the following properties: for any $X \in \cC$, $F[\1_X] = \1_{F(X)}$; for any $f,g \in \cC$ for which the composition $gf$ is defined, we have $F[gf] = F[g] F[f]$. 
Given two functors $F,G\colon \cC \to \cD$, a \emph{natural transformation} $\eta\colon F \Rightarrow G$ is a collection of maps $\eta_X\colon F(X) \to G(X)$ such that for any morphism $f\colon X \to Y$ in $\cC$, the diagram 
\begin{equation*}
\begin{tikzcd}
X\ar[d,  "f"] 
& F(X) 
    \ar[r, "\eta_X"] 
    \ar[d, "{F[f]}"']
& G(X) 
    \ar[d, "{G[f]}"]
\\
Y 
& F(Y) \ar[r, "\eta_Y"] 
& G(Y)
\end{tikzcd}
\end{equation*}
commutes. 
One example of a functor is $\pi_0\colon\Top \to \Set$, where $\pi_0(X)$ denotes the set of path-connected components of a topological space $X$, and the morphisms are set maps $\pi_0[f]\colon \pi_0(X) \to \pi_0(Y)$ sending a path-connected component $A$ in $X$ to the element $f(A)$ in $Y$.

In the special case where $J$ is a small category, a diagram is simply a functor $F\colon J \to \cC$.
Let $c \in \cC$. By a slight abuse of notation, we define the constant functor $c \colon J \to \cC$, $c(j) \mapsto c$ for all $j \in J$ and all morphisms are sent to the identity morphism. 
In this setting, a \emph{cocone} is a natural transformation $\lambda\colon F \to c$.
We denote $\lambda_j\colon F(j) \to c$ 
and note that for any $f\colon j \to k$ in $J$, $\lambda_k \circ F[f] = \lambda_j$. 
A cocone $\lambda\colon F \to c$ is a \emph{colimit} if, for any other cocone $\lambda'\colon F \to c'$, there is a unique  morphism $u\colon c \to c'$ such that the diagram
\begin{equation*}
\begin{tikzcd}
F(j) 
    \ar[rr, "{F[f]}"]
    \ar[dr, "\lambda_j"]
    \ar[ddr, "\lambda_j'"']
&& F(k)
    \ar[dl, "\lambda_k"']
    \ar[ddl, "\lambda_k'"]
\\
& c \ar[d, dashed, "u"] \\
& c'
\end{tikzcd}
\end{equation*}
commutes for all $f\colon j \to k$ in $J$. 
We denote this by $\colim_I(F)$, and note that this always exists and is unique in our setting where $\cD =\Set$.

For the rest of this paper, we turn our attention to functors of the form $F\colon\Open(X) \to \Set$, called \emph{pre-cosheaves}. 
We are particularly interested in the case when a pre-cosheaf is a \emph{cosheaf}. In short, a cosheaf is a presheaf that is entirely and uniquely defined by any open cover of ${U \in \Open(X)}$.
More rigorously, given an object $U \in \Open(X)$ and a cover $\{U_\alpha\}_{\alpha \in A}$ of $U$, define a category $\cU = \{U_\alpha \cap U_{\alpha'} \mid \alpha,\alpha' \in A \}$ with morphisms given by inclusion. As these open sets $U_\alpha$ are themselves objects in $\Open(X)$, the functor $F\colon\cU \to \Set$ is well-defined, so we can consider its colimit $\lambda\colon F \to L$. A \emph{cosheaf} is a functor $F$ whose unique map $L \to F(U)$ given by the colimit definition is an isomorphism for every $U \in \Open(X)$.

\subsection{Mapper Cosheaves and the Interleaving Distance}
\label{ssec:functorGraphs}

In this section, we show how we can represent the data of a mapper graph as a cosheaf. 
We assume that we are given data in the form of a compact topological space with a function $f\colon\X \to \R$. 
Fix a resolution $\delta>0$. 
We assume $\Im(f) \subseteq[-\delta L, \delta L]$.
This choice of resolution induces a discretization of $[-\delta L, \delta L]$ into a cell complex $K$ where the 0-cells are given by the points $\{ \sigma_i \coloneqq i \delta \mid i \in [-L, \cdots,L]\}$, and edges are given by the intervals $\{ \tau_i \coloneqq (i \delta, (i+1)\delta) \mid i  \in [-L, L-1] \}$. Further, $K$ induces a cover of the image with cover elements of the form 
    \[\{ U_{\sigma_i} = ((i-1)\delta, (i+1)\delta) \mid i \in \{-L+1,\cdots, L-1\}\} \]
and intersections of the form
    \[\{ U_{\tau_i} = (i\delta, (i+1)\delta) \mid i \in [-L,\cdots, L-1]\}.\]
Together, we write $\cU= \{U_{\sigma_i}\} \cup \{ U_{\tau_i}\}$. 
The set $\cU$ forms a poset $(\cU, \subseteq)$ under inclusion. 
We use $\rho \in K$ to denote either a $\sigma_i$ or a $\tau_i$, and similarly write $U_\rho \in \cU$ when referring to cells of either dimension.

We define open sets in this poset using the Alexandrov topology, following \cite{Barmak2011} and \cite{chambers2023bounding}.
Given this poset $(\cU,\subseteq)$, for any set $S \subseteq \cU$, the upset is 
$S^{\uparrow} = \{U \in \cU \mid \exists \,V \in S,\, V \subseteq U  \,  \}$ 
and the downset is
$S^{\downarrow} = \{ U \mid \exists\,  V \in S,\, U \subseteq V \}$. 
A set $S \subset \cU$ is said to be open iff $S^{\downarrow} = S$ or equivalently, if for any $U \in S$, and any $V \subset U$, $V \in S$. 
It can be checked that this topology has a basis given by the set of downsets: 
write $S_\rho = \{ U_{\rho}\}^{\downarrow}$, then the basis for $\Open(\cU)$ is given by $\{ S_\rho^{\downarrow} \mid \rho \in K\}$ and we call $S_\rho$ a basic open set. 
See the left of \cref{fig:PosetNotation} for a visualization of this notation. 

We will often use these combinatorial subsets of $\cU$ interchangeably with their geometric realizations, given by $|S| \coloneqq \bigcup_{U \in S} U \subset \R$. 
We can check that this notation is then reasonable on our  set. 
First, for $\sigma_i = i \delta$ and $U_{\sigma_i} = ((i-1)\delta, (i+1)\delta) \subset \R$, we have 
${\{U_{\sigma_i}\}^{\downarrow} =\{U_{\tau_i},U_{\sigma_i},  U_{\tau_{i+1}}\}}$ 
and $|\{U_{\sigma_i}\}^{\downarrow}| = U_{\sigma_i}$. 
Similarly, for $U_{\tau_i} = (i\delta, (i+1)\delta)$, 
$\{U_{\tau_i}\}^{\downarrow} =\{U_{\tau_i}\} $, and of course 
$|\{U_{\tau_i}\}^{\downarrow}| = U_{\tau_i}$.
Again, see \cref{fig:PosetNotation} for an illustration of the notation. 

\begin{figure}
    \centering
    \includegraphics[width=\textwidth]{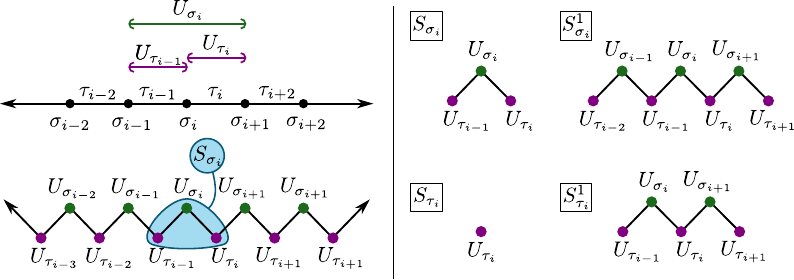}
    \caption{The notation for the cover of $\R$. Left: the distinction between the cover element $U_{\sigma}$ and the open set in the Alexandrov topology, $S_{\sigma}$, which is a discrete set of three objects. Right: 1-thickenings $S_{\sigma_i}$ and $S_{\tau_i}$ for the two types of basis open sets. (Left side of figure modified from \cite{chambers2023bounding}). } 
    \label{fig:PosetNotation}
\end{figure}

Let $\pi_0\colon \Top \to \Set$ denote the functor returning the set of path connected components of a given topological space. 
Then the mapper cosheaf of $(\X,f)$ at resolution $\delta$ is given by the functor 
\begin{equation*}
\begin{matrix}
    F\colon & \Open(\cU) & \rightarrow & \Set \\
       & S & \mapsto & \pi_0(f\inv(|S|))
\end{matrix}
\end{equation*}
By functoriality of $\pi_0$, for open sets $S \subseteq T$, $F$ induces a map
\[F[S \subseteq T] \colon \pi_0f\inv(|S|) \to \pi_0f\inv(|T|),\]
so that $F$ satisfies the requirements of being a functor.
Furthermore, this functor satisfies the definition of a cosheaf. 
Thus, we call any cosheaf of the form $F\colon \Open(\cU) \to \Set$ a \textit{mapper cosheaf}. 
For notational simplicity, we write the induced map as $F[\subseteq]\colon F(S) \to F(T)$ when $S \subseteq T$ is clear from context. 

We next show that we can define an interleaving on these mapper cosheaves.
Given any open set $S \in \Open(\cU)$, we construct the \emph{1-thickening} by taking the downset of the upset of $S$, denoted by
$S^1 = (S^{\uparrow})^\downarrow$.
The \emph{$n$-thickening} is defined to be $n$ repetitions of this process, given recursively as
\begin{equation*}
    S^n = 
    \begin{cases}
    S & n = 0;\\
    (S^{n-1})^{\uparrow\downarrow} & n \geq 1.
    \end{cases}
\end{equation*}
Each $S^n$ is an element of $\Open(\cU)$, if $S \subseteq T$, then $S^n \subseteq T^n$. 
Thus we can view the thickening operation as a functor $(-)^n\colon \Open(\cU)   \to   \Open(\cU)$.
It is easy to check  that for a basic open $S_{\sigma_i}$, $|S_{\sigma_i}^n| = ((i-n)\delta, (i+n)\delta)$ and for $S_{\tau_i}$, $|S_{\tau_i}^n| = ((i-n)\delta, (i+1+n)\delta)$. 
See the right of \cref{fig:PosetNotation} for an example of 1-thickenings of basic opens.

As shown in previous work \cite{chambers2023bounding}, this thickening can be used to define an interleaving distance that compares mapper cosheaf functors  
${F,G\colon\Open(\cU) \to \Set}$.
We use $F^n$ to denote the composition of functors $F \circ ( - )^n\colon \Open(\cU) \to \Set$, so that $F^n(S) = F(S^n)$, and similarly for $G^n$. 
We define an \emph{$n$-interleaving} to be a pair of natural transformations $\phi\colon F \Rightarrow G^n$ and $\psi\colon G \Rightarrow F^n$ given by set maps $\phi_S\colon F(S) \to G(S^n)$. 
We note the map at $S^n$, $\phi_{S^n}\colon F(S^n) \to G(S^{2n})$, can also be viewed as a component map of a natural transformation $\phi^n\colon F^n \Rightarrow G^{2n}$. Therefore, when $\phi$ is indeed a natural transformation, we use the notation $\phi_{S^n}$ and $\phi_S^n$ interchangeably.

\begin{definition}[\cite{chambers2023bounding}]
\label{def:interleavingDistance}
Let $F,G\colon\Open(\cU) \to \Set$ be cosheaves and $n \in \mathbb{Z}_{\geq 0}$. An \emph{$n$-interleaving} is a pair of natural transformations 
$\phi\colon F \Rightarrow G^n$
and 
$\psi\colon G \Rightarrow F^n$
such that the diagrams 
\begin{equation*}
    \begin{tikzcd}[ampersand replacement=\&]
        F(S) 
            \ar[rr, "{F[S \subseteq S^{2n}]}"]   
            \ar[dr, "\phi_S"',violet] 
            \& \& F(S^{2n}) \& 
        \& F(S^n) \ar[dr]
            \ar[dr, "\phi_{S^{n}}",violet]
        \& \\
        \& G(S^n)\ar[ur, "\psi_{S^n}"', orange]  \& \& 
        G(S) 
            \ar[rr, "{G[S \subseteq S^{2n}]}"']
            \ar[ur, "\psi_{S}", orange]  
        \&\& G(S^{2n})
    \end{tikzcd}
\end{equation*} 
commute for all $S \in \Open(\cU)$.
The \emph{interleaving distance} is given by 
\begin{equation*}
    d_I(F,G) = \inf\{ n \geq 0 \mid \text{there exists an $n$-interleaving} \}, 
\end{equation*}
and is set to be $d(F,G) = \infty$ if there is no interleaving for any $n$.
\end{definition}

For the sake of simplicity, we will denote these diagrams by $\triangled_{\phi,\psi}(S)$ and $\triangleu_{\phi,\psi}(S)$, respectively. Similarly, we denote the diagrams representing natural transformations
\begin{equation*}
    \begin{tikzcd}
        F(S)  
            \ar[r, "{F[\subseteq ]}"] 
            \ar[dr, "\phi_S"', very near start, violet]
        & F(T)
            \ar[dr, "\phi_T"', very near start, violet]
        & \\
        & G(S^n) 
            \ar[r, "{G[\subseteq ]}"'] 
        & G (T^n)
    \end{tikzcd}
    \hspace{1cm}
    \begin{tikzcd}
        & F(S^n)
            \ar[r, "{F[\subseteq ]}"] 
        & F (T^n)\\
        G(S)
            \ar[r, "{G[\subseteq ]}"'] 
            \ar[ur, "\psi_S", very near start, orange]
        & G(T) 
            \ar[ur, "\psi_T", very near start, orange]
        & 
    \end{tikzcd}
\end{equation*}
by $\Parallelograml_\phi(S,T)$ and $\Parallelogramr_\psi(S,T)$, respectively. 
As shown in \cite{chambers2023bounding}, \cref{def:interleavingDistance} fits in the interleaving framework built by Bubenik et al.~\cite{bubenik2015metrics} and thus it is an extended pseudometric.

\subsection{Loss Function Definition}
\label{ssec:LossFunction}

In this section, we outline the loss function, largely following \cite{chambers2023bounding}, with some necessary technical modifications.
First, we introduce the notion of an assignment, inspired by~\cite{Robinson2020} and~\cite{nlab:unnatural_transformation}, which we use to denote collections of morphisms used to define an interleaving, without requiring that these morphisms form a natural transformation.

\begin{definition}[Modified from \cite{chambers2023bounding}]
\label{def:assignment}
Given functors $H,H'\colon\Open(\cU) \to \Set$, an \emph{unnatural transformation}\footnote{A  natural transformation is an unnatural transformation which obeys commutativity properties, so the two sets are not mutually exclusive.}   $\eta\colon H \rightarrow H'$ is a collection of maps $\eta_S\colon H(S) \to H'(S)$ with no additional promise of commutativity. 
For a fixed $n \geq 0$ and cosheaves $F$ and $G$, an \emph{$n$-assignment} is a pair of unnatural transformations $\phi\colon F \Rightarrow G^n$ and $\psi\colon G \Rightarrow F^n$.

An \emph{(extended) basis unnatural transformation} 
$\eta\colon H \leadsto (H')^n$ for mapper cosheaves $H$ and $H'$ is a collection of maps
\begin{equation*}
    \eta = \{ \eta_{S_\rho}\colon H(S_\rho) \to H'(S_\rho^n) \mid \rho \in K\} 
        \cup \{\eta_{S_\rho^n}\colon H(S_\rho^n) \to H'(S_\rho^{2n}) \mid \rho \in K\}
\end{equation*}
for all basis elements $S_\rho$ from $\rho \in K$. 
A \emph{basis $n$-assignment} (or simply a \emph{basis assignment}) is a pair of extended basis unnatural transformations
$
\phi\colon F \leadsto G^n
$
and 
$
\psi\colon G \leadsto F^n.
$
\end{definition}
The modification from \cite{chambers2023bounding} presented here is to include the maps $\eta_{S^n_\sigma}$ along with $\eta_{S_\sigma}$ as part of the basis $n$-assignment. 
In \cite{chambers2023bounding}, we used the fact that the maps $\eta_{S_\sigma}$ could be used to determine $\eta_{S^n_\sigma}$; however the matrix representations introduced in \cref{sec:DataStructure} do not allow for easy computation of this extension. 
Thus, we have modified the definition so that the maps $H(S_\sigma^n) \to H(S_\sigma^{2n})$ are also given as input. 
For the remainder of the paper, we use the term ``basis unnatural transformation'' to imply the extended definition given here. 

We take these basis unnatural transformations and extend them to natural transformations after a shift using a value $k$.
We adopt the following notation:
when an assignment might not commute, we  denote its maps by lower case $\phi \colon F \leadsto G^n$ and $\psi\colon G \leadsto F^n$ with squiggly arrows; for assignments which are constructed to be natural transformations, we  denote them by upper case $\Phi\colon F \Rightarrow G^n$ and $\Psi\colon G \Rightarrow F^n$ with double arrows. 
\begin{definition}[\cite{chambers2023bounding}]
\label{def:extend_unnat_trans}
Given a basis unnatural transformation $\phi\colon F \leadsto G^n$ and a fixed $k$, define the (non-extended) basis unnatural transformation $\Phi\colon F \Rightarrow G^{n+k}$ by 
$\Phi_S =  G[S^n \subseteq S^{n+k}] \circ \phi_S$
for all $S \in \{S_\sigma \mid \sigma \in K\}$. %
\end{definition}

We measure the quality of an  $n$-assignment $\phi, \psi$ using  collections of distances for all open sets, $\{d_S^F \mid S \in \Open(\cU)\}$ and $\{d_S^G \mid S \in \Open(\cU)\}$, defined as follows.  
\begin{definition}[\cite{chambers2023bounding}]
The distance $d_S^{F}(A,B)$ for $A,B \in F(S)$ is defined to be 
\begin{equation}
\label{eqn:DistanceDefinition}
    d_S^{F}(A,B) = 
    \min\{ k\geq0 \mid F[S \subset S^k](A) = F[S \subset S^k](B)\}.
\end{equation}
If no such $k$ exists, then we  set $d_S^{F}(A,B) = \infty$.
\end{definition}
Geometrically, this can be thought of as the smallest integer $n$ such that the connected components represented by $A$ and $B$ in $f\inv(|S|)$ are in the same connected component when included into $f\inv(|S^n|)$. 
Further, we can see that this distance is an ultrametric since if for ${A,B,C \in F(S)}$, $k = \max\{d_S^{F}(A,B), d_S^{F}(B,C)\}$, then $A$, $B$, and $C$ all map to the same object in $F(S^k)$, and thus $d_S^F(A,C) \leq k$. 
\begin{definition}[\cite{chambers2023bounding}]
\label{df:Loss_by_diagram}
Fix an $n$-assignment
$(\phi,\psi)$. 
We define four \emph{diagram loss functions}: 
\begin{align*}
\Lpl^{S,T}(\phi)
    &= \max\limits_{\alpha \in F(S)} 
       d_{T^n}^{G}(
        \varphi_T \circ F[S \subseteq T](\alpha),
        G[S^n \subseteq T^n] \circ \varphi_S(\alpha)
        );\\
\Lpr^{S,T} (\psi)
    &= \max\limits_{\alpha \in G(S)} 
       d_{T^n}^{F}(
        \psi_T \circ G[S \subseteq T](\alpha), 
        F[S^n \subseteq T^n] \circ \psi_S(\alpha)
    );\\
\Ltd^S (\phi,\psi)
    &= \max\limits_{\alpha \in F(S)}  \Big \lceil \tfrac{1}{2} \cdot d_{S^{2n}}^{F}(
    F[S \subseteq S^{2n}] (\alpha),
    \psi_{S^n} \circ \varphi_S(\alpha)
    ) \Big \rceil;\\
\Ltu^S (\phi,\psi)
    &= \max\limits_{\alpha \in G(S)}\Big \lceil \tfrac{1}{2} \cdot d_{S^{2n}}^{G}(
    G[S \subseteq S^{2n}](\alpha),
    \varphi_{S^n} \circ \psi_S(\alpha)
    )\Big \rceil.
\end{align*}
\end{definition}
Combining these ideas, we define the loss function between given $n$-assignments.

\begin{definition}[Modified from \cite{chambers2023bounding}]
\label{shortdef}
Given an $n$-assignment $(\phi,\psi)$, the \emph{(extended basis) loss function} is 
\begin{equation}
\label{eq:loss}
L_B(\phi,\psi) = \max_{ \substack{\sigma < \tau \in K \\ \rho \in K}}
\left\{
\Lpl^{S_\tau, S_\sigma}, \Lpr^{S_\tau, S_\sigma},
\Lpl^{S_\rho, S_\rho^n}, \Lpr^{S_\rho, S_\rho^n},
\Ltu^{S_\rho}, \Ltd^{S_\rho}
\right\}
\end{equation}
where the max is over all cells $\rho \in K$ of any dimension, or adjacent edge-vertex pairs $(\tau, \sigma)$. 
\end{definition}

Again, note the slight modification from the definition given in  \cite{chambers2023bounding} where two additional terms, $\Lpl^{S_\rho, S_\rho^n}$ and $\Lpr^{S_\rho, S_\rho^n}$, are included in this new loss function. 
Because the extended basis unnatural transformation provides the maps 
$\phi_{S_\sigma^n}\colon F(S_\sigma^n) \to G(S_\sigma^{2n})$ as inputs rather than being determined from the maps $\phi_{S_\sigma}$, we are taking a maximum over more inputs so it is an immediate corollary that this function gives a bound on the true interleaving distance. 

\begin{theorem}[{Corollary of \cite[Theorem 3.16]{chambers2023bounding}}]
\label{maintheorem}
Given an (extended) basis $n$-assignment  
$\phi$
and 
$\psi$, 
we have 
$d_I(F,G) \leq n + L_B(\phi,\psi).$
\end{theorem}

\section{Problems} 
\label{sec:Problems}

Ultimately, the goal of this paper is to solve the following problem. 

\begin{problem}
\label{prob:ComputeInterleavingDist}
\emph{Compute the interleaving distance.}
\begin{itemize}
    \item \emph{Input:}  A pair of mapper graphs represented by cosheaves  $F$ and $G$. 
    \item \emph{Question:} Compute the interleaving distance, $d_I(F,G)$. 
\end{itemize}
\end{problem}

We will give further details on the input structure in \cref{sec:DataStructure}. 
For the moment, however, following standard practice in complexity theory, we focus on the binary decision version of the problem.

\begin{problem} 
\label{prob:BinaryDecision} 
\emph{Binary decision problem.}
\begin{itemize}
    \item \emph{Input:} Fixed $n\geq 0$ and a pair of mapper graphs represented by cosheaves  $F$ and $G$. 
    \item \emph{Question:} Is there an $n$-interleaving between $F$ and $G$? 
\end{itemize}
\end{problem}

Of course, searching for a polynomial time algorithm to answer \cref{prob:ComputeInterleavingDist} is a fool's errand. 
By \cite{bjerkevik2017computational}, \cref{prob:BinaryDecision} is graph isomorphism complete when $n=0$; and is NP-complete when $n >0$. 
Still, if we assume we can answer the binary decision problem \cref{prob:BinaryDecision}, then we can use a standard binary search algorithm to answer \cref{prob:ComputeInterleavingDist} as follows. 
Initially we find a plausible interval  within which to search for $n$ by doubling $n \in \{0, 1, 2, 4,\cdots\}$ iteratively until \cref{prob:BinaryDecision} returns \textit{Yes}.
That is, for $A = 2^{j-1}$, P\ref{prob:BinaryDecision} returns \textit{No} and for $B = 2^j$, P\ref{prob:BinaryDecision} returns \textit{Yes}; then we will use $[A,B]$ as our initial interval.
Then, we run standard binary search on the interval $[A,B]$ using the response to P\ref{prob:BinaryDecision} to determine whether to replace the interval with the top or bottom half. 
This is given in \cref{alg:BinarySearch}, where we use \texttt{is\_interleaved}$(F,G, n)$ for the subroutine that answers P\ref{prob:BinaryDecision}. 

\begin{algorithm}
\caption{Binary Search For $N$}
\label{alg:BinarySearch}
\begin{algorithmic}[1]
\Procedure{BinarySearch}{$F,G$}
    \State $i=0$
    \While {\texttt{is\_interleaved}$(F,G, 2^i)$ is False} 
        \State {$i \gets i+1$}
    \EndWhile
    \State $ A \gets max(0, 2^{i-1} + 1)$, $B \gets 2^i$
    \While{$A \neq B$}
        \State $n \gets \lfloor (A + B) / 2 \rfloor$
        \If{\texttt{is\_interleaved}$(F,G, n)$ is False}
            \State $A \gets n+1$ \Comment{Next interval is $[n+1,B]$}
        \Else
            \State  $B \gets n$ \Comment{Next interval is $[A,n]$}
        \EndIf
    \EndWhile
    \State \Return $A$ \Comment{Result is $d_I(F,G) \leq A$}
\EndProcedure
\end{algorithmic}
\end{algorithm}

Thus, we focus this paper on building an optimization framework to provide a solution to \cref{prob:BinaryDecision}. 
We can proceed by searching, given a fixed $n$, for an $n$-assignment $(\phi,\psi)$ until we find a pair that forms an $n$-interleaving. 
However, the framework built up in the previous section gives an additional viewpoint for this search. 
We can instead look for an $n$-assignment $(\phi,\psi)$ that minimizes the loss function of \cref{shortdef} used in the bound in \cref{maintheorem}, thereby making the bound on the interleaving distance as small as possible. 
This is exemplified in the following problem. 

\begin{problem} 
\label{prob:LossComputation} 
\emph{Loss computation problem.}
\begin{itemize}
    \item \emph{Input:} Fixed $n\geq 0$ and a pair of mapper graphs represented by cosheaves  $F$ and $G$. 
    \item \emph{Question:} Find the value $L = \inf_{\phi,\psi}L_B(\phi,\psi)$ over all $n$-assignments $\phi$, $\psi$. This implies that $d_I(F,G) \leq n+L$.  
\end{itemize}
\end{problem}

First, note that there is a polynomial time reduction from  \cref{prob:LossComputation}  to \cref{prob:BinaryDecision}. 
Indeed, there is an $n$-interleaving iff $\inf_{\phi,\psi}L_B(\phi,\psi) = 0$. 
This implies that since \cref{prob:BinaryDecision} is NP-hard, \cref{prob:LossComputation} is NP-hard as well (specifically, GI-complete for $n=0$, NP-complete for $n >0$). 
Thus, looking for a polynomial time algorithm for \cref{prob:LossComputation} is again unlikely to result in success (unless $P=NP$) so we shall not try. 

What we can do in polynomial time is the following. 
If we fix the input $n$-assignment, we have a modified version of \cref{prob:BinaryDecision} as follows.

\begin{problem} 
\label{prob:BinaryDecisionFixedAssignment} 
\emph{Binary decision problem for fixed $\phi$ and $\psi$.}
\begin{itemize}
    \item \emph{Input:} Fixed $n\geq 0$, a pair of mapper graphs represented by cosheaves  $F$ and $G$, and an $n$-assignment $\phi$, $\psi$. 
    \item \emph{Question:} Is $(\phi,\psi)$ an $n$-interleaving between $F$ and $G$? 
\end{itemize}
\end{problem}

Similarly, for a fixed $\phi$ and $\psi$, we have a modified version of \cref{prob:LossComputation} as follows. 

\begin{problem} 
\label{prob:LossComputationFixedAssignment} 
\emph{Loss computation problem for fixed $\phi$ and $\psi$.}
\begin{itemize}
    \item \emph{Input:} Fixed $n\geq 0$, a pair of mapper graphs represented by cosheaves  $F$ and $G$, and an $n$-assignment $\phi$, $\psi$. 
    \item \emph{Question:} Compute $L = L_B(\phi,\psi)$, and thus $d_I(F,G) \leq n+L$.  
\end{itemize}
\end{problem}

It was shown in \cite{chambers2023bounding} that \cref{prob:LossComputationFixedAssignment} can be solved in polynomial time. Thus, the primary source of computational complexity lies in searching for an optimal assignment, and hence the lowest bound. To address this, we develop an optimization formulation that searches for a best assignment to answer \cref{prob:BinaryDecision} or \cref{prob:LossComputation}; when combined with binary search, this also yields a solution to \cref{prob:ComputeInterleavingDist}.

The resulting loss value can, in a single run, provide a na\"ive bound on the distance between the input mapper graphs. However, as we show below, there are examples in which the minimum achievable loss for a fixed $n$ does not yield a tight bound on the interleaving distance.

\begin{example}
\label{ex:binarySearchExample}

\begin{figure}
    \centering
    \includegraphics[width=0.3\linewidth, align = c]{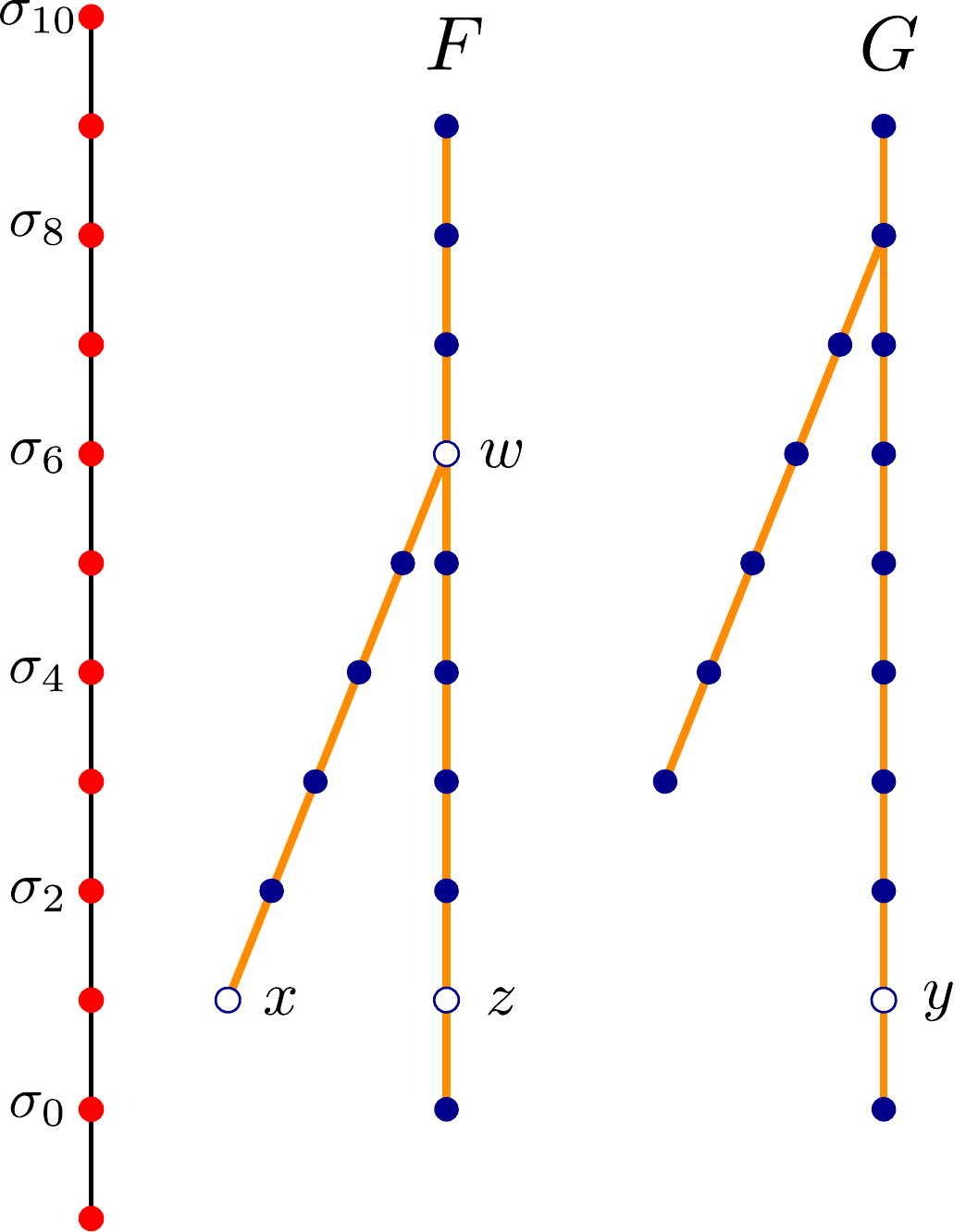}
    \qquad
    \includegraphics[width=0.4
    \linewidth, align = c]{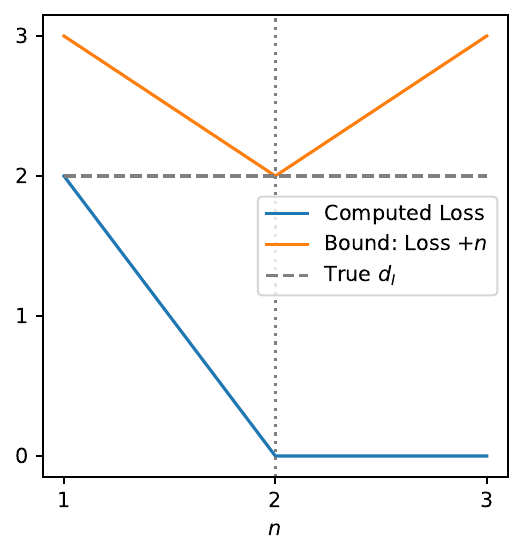}
    \caption{(Left) An example of a pair of mapper graphs where the binary search on $n$ is required. (Right) The computed loss and interleaving bound for this example. }
    \label{fig:BinarySearchExample}
\end{figure}

In this example, we present a pair of graphs and a choice of $n$ for which every assignment satisfies $d_I \lneq n + L_B(\phi,\psi)$, demonstrating that a loss computation for a single choice of $n$ is not enough and therefore a binary search over $n$ is indeed necessary. 
Consider the two mapper graphs shown on the left of \cref{fig:BinarySearchExample} and fix $n = 1$.
Focusing on the vertices $x$ and $z$, note that the only feasible assignment is $\phi(x) = \phi(z) = [y]$, where $[y]$ denotes the connected component containing $y$ in the 1-smoothing of $G$, i.e., $[y] \in G(S_{\sigma_1}^1)$. In the opposite direction, the vertex $y$ has two possible assignments, so $\psi(y)$ may be either $[x]$ or $[z]$ in $F(S_{\sigma_1}^1)$.
Then in the diagram 
\begin{equation*}
\begin{tikzcd}
    F(S_{\sigma_1}) \ar[rr] \ar[dr] 
    && F(S_{\sigma_1}^2) 
    & \substack{x\\z} \ar[rr] \ar[dr] && \substack{{[x]}\\{[z]}}
    \\
    & G(S_{\sigma_1}^1) \ar[ur] 
    &&& {[y]}\ar[ur, dashed, "?"]
\end{tikzcd}
\end{equation*}
either choice for $y$ will result in the other object not commuting. 
Then the loss is given by 
\[
k = \lceil d_F^{S_{\sigma_i}^2}([x],[y]) / 2 \rceil = \lceil 3/2 \rceil = 2,
\]
so the interleaving bound is $n+k = 1+2 = 3$. 
However, we can build an interleaving for $n =2$ by sending the left tail of $F$ to the left tail of $G$, showing that the true interleaving $d_I(F,G) = 2$ is strictly less than any possible bound found by the solver for a single choice of $n$.
This can be remedied, however, by running a binary search over $n$, which we implement.

\end{example}

Because the above example shows we might not get a tight bound by only optimizing the assignments for a given $n$, we take the following modification. 
At the expense of higher computation time per run, we have a second version of binary search where we use the output of P\ref{prob:LossComputation} to its greatest extent. 
As before, we use an exponential search to find an initial interval $[A,B]$. 
However, we modify the binary search as follows. 
Given the midpoint $n \in [A,B]$, we use P\ref{prob:LossComputation} to compute the loss $L$. 
If $L =0$, this is equivalent to finding an $n$-interleaving, and so the next interval to search will be $[A,n]$. 
If on the other hand, $L>0$, then algorithm has found an $n+L$ interleaving, and the next interval to search will be $[n+1, \min\{n+L, B\}]$. 
This modified binary search is given in \cref{alg:ConstrainedBinarySearch}, where we use \texttt{loss}$(F,G,n)$ to represent the output of P\ref{prob:LossComputation}. 

\begin{algorithm}
\caption{Constrained Binary Search For $N$}
\label{alg:ConstrainedBinarySearch}
\begin{algorithmic}[1]
\Procedure{ConstrainedBinarySearch}{$F,G$}
    \State $i=0$
    \While {\texttt{is\_interleaved}$(F,G, 2^i)$ is False} 
        \State {$i \gets i+1$}
    \EndWhile
    \State $A \gets max(0,2^{i-1}+1)$, $B \gets 2^i$
    \While{$A \neq B$}
        \State $n \gets \lfloor (A + B) / 2 \rfloor$
        \State $L \gets \texttt{loss}(F,G,n)$
        \If{$L >0$}
            \State $A \gets n+1$, $B \gets \min\{n+L, B\}$ \Comment{Next interval is $[n+1,\min\{n+L, B\}]$}
        \Else \Comment{$L = 0$}
            \State  $B \gets n$ \Comment{Next interval is $[A,n]$}
        \EndIf
    \EndWhile
    \State \Return $A$ \Comment{Result is $d_I(F,G) \leq A$}
\EndProcedure
\end{algorithmic}
\end{algorithm}

In \cref{sec:DataStructure}, we give explicit constructions of the representations of $F$ and $G$ as graphs, and use these to develop optimization algorithms for P\ref{prob:BinaryDecision} and P\ref{prob:LossComputation}. We then combine these solutions with binary search, as described above, to obtain the smallest available upper bound $d_I(F,G) \leq N$.

While we cannot prove that this $N$ equals the interleaving distance---unsurprising given that the problem is NP-hard and we do not resolve whether $\mathrm{P}=\mathrm{NP}$---our experiments (\cref{sec:experiments}) show that this approach recovers the interleaving distance in several cases where the true value is known.

\section{Data Structures}
\label{sec:DataStructure}

In this section, we show how a mapper cosheaf $F:\Open(\cU)\to\Set$ can be represented as a graph and how the unnatural transformations $\phi$ and $\psi$ can be encoded as binary block matrices. We then demonstrate that the entries of the basis loss function arise from matrix multiplication.  

\begin{figure}%
    \centering
    \includegraphics[width=0.5\linewidth]{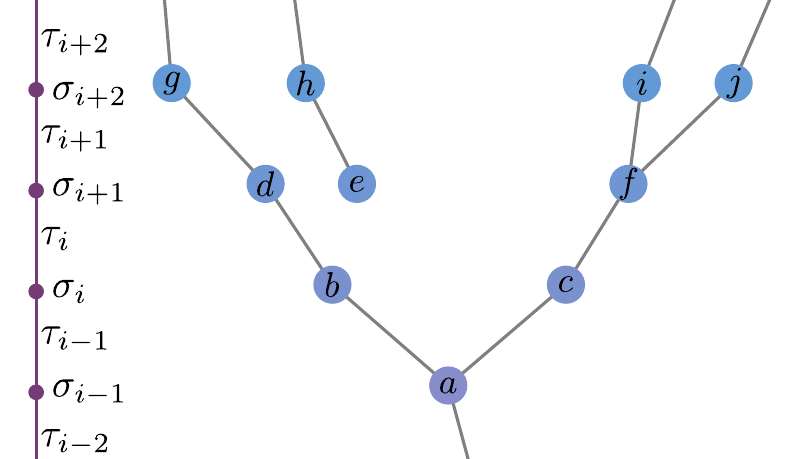}
    \caption{Example of a mapper cosheaf graph representation.
    For example, the vertices from $F(S_{\sigma_i})$ are $b$ and $c$; the edges from $F(S_{\tau_i})$ are $bd$ and $cf$.}
    \label{fig:example_slice}
\end{figure}

\subsection{Representing Mapper Cosheaves as Graphs}
\label{ssec:cosheafgraphs}

Given a mapper cosheaf $F\colon\Open(\cU) \to \Set$, we build a graph $(V_F,E_F)$ as follows. 
Recall that $K$, the one-dimensional cubical complex discretizing $\R$, consists of vertices  $\sigma_{-L},\cdots,\sigma_L$ with heights contained in the bounding box $[-L\delta,L\delta]$, along with edges ${\tau_j = (\sigma_j, \sigma_{j+1})}$. 
The graph for the mapper cosheaf $F:\Open(\cU) \to \Set$ is constructed by generating a vertex for every object in every $F(S_{\sigma_i})$, an edge for every object in every $F(S_{\tau_i})$, and connecting them using the constituent morphisms of $F$.
The resulting vertex and edge sets are
$V_F = \coprod_{i =1}^B F(S_{\sigma_i})$ and $E_F = \coprod_{i =1}^{B-1} F(S_{\tau_i})$, where the difference in index sets is a result of having an edge for every object in every $F(S_{\tau_i})$, of which there is one fewer by construction. 
The endpoints of any edge $e \in F(S_{\tau_i}) \subseteq E_F$ are found via the attaching maps: 
\[F[S_{\tau_i} \subseteq S_{\sigma_{i}}](e) \in F(S_{\sigma_i}) 
    \text{ and } 
F[S_{\tau_i} \subseteq S_{\sigma_{i+1}}](e) \in F(S_{\sigma_{i+1}}). \]
For example,  $ (c,f) \in F(S_{\tau_{i}})$ in \cref{fig:example_slice} has endpoints $c \in F(U_{\sigma_i})$ and $f \in F(U_{\sigma_{i+1}})$. 
This data is stored in a standard adjacency list, where each vertex $v \in F(U_{\sigma_i})$ stores the associated value $i$ as a representation of its height.

We next use $F$ to construct graph representations for both the functors $F^n$ and $F^{2n}$; see \cref{fig:example_all_graphs} for reference.
Since building $F^{2n}$ from $F^n$ is the same as building $F^{n}$ from $F$ up to indexing, we only describe the construction of $F^n$. 
For any $\rho \in K$, $F^n(S_{\rho}) = F(S^n_{\rho})$, and because of the cosheaf properties, this means 
\begin{equation}
\label{eq:colimit}
F^n(S_{\rho}) = \colim _{S \in S_{\rho}^n}F(S) = \colim_{S_{\rho'} \in S_{\rho}^n}F(S_{\rho'}). 
\end{equation}
If $\rho$ is a vertex $\sigma_i$, then checking indices shows that the colimit is taken over the set
\begin{equation}
\{\rho' \in K \mid S_{\rho'} \in S_{\sigma_i}^n \}
=
\{ \sigma_j \mid j \in [i-n,i+n]\} \cup \{ \tau_j \mid j \in [i-n-1,i+n]\}.
\end{equation}
The result is that $F^n(S_{\sigma_i})$ is the set of connected components of the graph induced by the vertex set 
$\coprod_{j \in [i-n,i+n]} F(S_{\sigma_j}) \subseteq V_F.$
Each connected component results in a vertex in the graph representation of $F^n$. 
This induced graph does not include any edges representing elements of $F(S_{\tau_{i-n-1}})$ or $F(S_{\tau_{i+n}})$, although these are both sets involved in the colimit of \cref{eq:colimit}.
However, since these ``half edge'' elements have maps only to $F(S_{\sigma_{i-n}})$ and $F(S_{\sigma_{i+n}})$ respectively, dropping them does not change the resulting colimit.

\begin{figure}
    \centering
    \includegraphics[width=.8\linewidth]{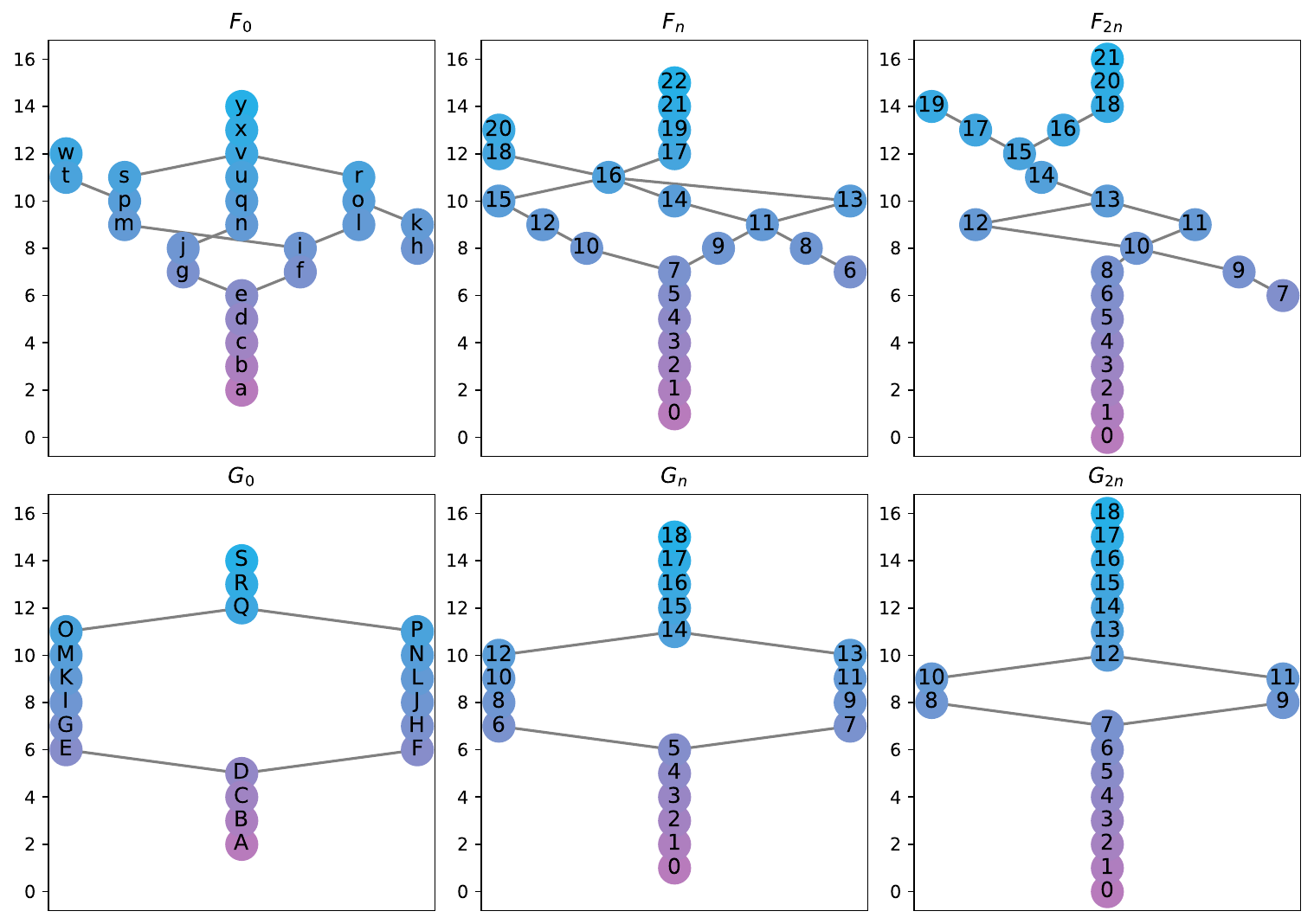}
    \caption{Input mapper cosheaf graphs $F$ and $G$ (left) with $n$- (middle) and $2n$-smoothed versions (right).
    Vertex labels do not compare across graphs; instead, they reflect the construction methods.
    These examples correspond to the matrix examples shown in \cref{fig:boundary_matrix,fig:inclusion_matrix,fig:distance_matrix,fig:assignment_matrix}.}
    \label{fig:example_all_graphs}
\end{figure}

Similarly, for an edge $\tau_i \in K$, the colimit is taken over the set 
\begin{equation*}
\{\rho' \in K \mid S_{\rho'} \in S_{\tau_i}^n \}
=
\{ \sigma_j \mid j \in [i-n+1,i+n]\} \cup \{ \tau_j \mid j \in [i-n,i+n]\}.
\end{equation*}
and thus is the connected components of the graph induced by vertex set
\begin{equation}
\label{eq:induced_graph_for_edge_thickening}
\coprod_{j \in [i-n+1,i+n]} F(S_{\sigma_j}) \subseteq V_F.
\end{equation}
Each connected component of this graph gives an edge in the graph representation of $F^n$, 
with endpoints determined by the induced map on colimits since $S_{\tau_i}^n \subseteq S_{\sigma_i}^n$ and ${S_{\tau_i}^n \subseteq S_{\sigma_{i+1}}^n}$. 

Since connected components of a graph $(V,E)$ can be determined in $O(|V|+|E|)$ time using breadth first search, $F^n$ can be built from $F$ in $O(L \cdot (|V_F| + |E_F|))$ time.
Further, during this construction the natural transformation $F \rightarrow F^n$ can be stored at the same time; see \cref{ssec:InclusionMatrices} for more details. 
Similarly, $F^{2n}$ can be built from $F$, so getting all four additional graphs from $F$ and $G$ can be done in $O(L \cdot (|V_F| + |E_F|))$ time.

Given a graph representation of one of the functors $F$, $F^n$, or $F^{2n}$, we must determine the distance $d_{S_{\sigma_i}}(v,v')$ between any two vertices, as defined in \cref{eqn:DistanceDefinition}.
Let the mapper cosheaf $H \in \{F, F^n, F^{2n}\}$ be graphically represented by a vertex set $V$ and an edge set $E$, where each vertex $v \in V$ corresponds to a $\sigma_i$ and each edge $e \in E$ corresponds to a $\tau_i$.
For $v, v' \in H(S_{\sigma_i})$, the condition $H[S_{\sigma_i} \subseteq S_{\sigma_i}^k](v) = H[S_{\sigma_i} \subseteq S_{\sigma_i}^k](v')$ holds precisely when $v$ and $v'$ lie in the same connected component of the subgraph induced by vertices in $S_{\sigma_i}^k$.
This reduces to finding a minimum-height path between the two vertices, which can be computed by a breadth-first search (BFS) in $O(|V_F| + |E_F|)$ time.

\subsection{Representing Maps and Relationships as Block Matrices}
\label{sec:AllMatrices}

Our next task is to represent the information carried by the relevant (un)natural transformations. 
Recall that an (un)natural transformation $\eta\colon H \to H'$ consists of set maps $\eta_{S}\colon H(S) \to H'(S)$ for all $S \in \{S_\rho \mid \rho \in K\} \cup \{S^n_\rho \mid \rho \in K\}$. Let $(V_H, E_H)$ and $(V_{H'}, E_{H'})$ denote the vertex and edge sets of the graph representations of $H$ and $H'$, respectively. 
Each set map sends vertices (resp.~edges) representing elements of $H(S_{\sigma_i})$ (resp. $H(S_{\tau_i})$) to vertices (resp. edges) representing elements of $H'(S_{\sigma_i})$ (resp. $H'(S_{\tau_i})$). To encode this consistently, we impose a total order on vertices (and similarly on edges) so that if $v$ corresponds to $\sigma_i$ and $v'$ to $\sigma_j$ with $i < j$, then $v < v'$.

We store $\eta$ as a pair of block matrices $M_V \in \{0,1\}^{|V_{H'}| \times |V_H|}$ and $M_E \in \{0,1\}^{|E_{H'}| \times |E_H|}$, where an entry at $(a,a')$ is 1 iff $\eta_{S_\rho}(a) = a'$ for the appropriate $\rho \in K$. All nonzero entries therefore lie within blocks indexed by $H'(S_\rho) \times H(S_\rho)$, and each column contains exactly one $1$, with all remaining entries equal to $0$.

In this section, we give specifics on the matrices representing the various maps needed in the loss function computation. 

\subsubsection{Boundary Matrices}
\label{ssec:BdyMatrices}
    \begin{figure}[]
        \centering
        \includegraphics[width = .8\textwidth, align = c]{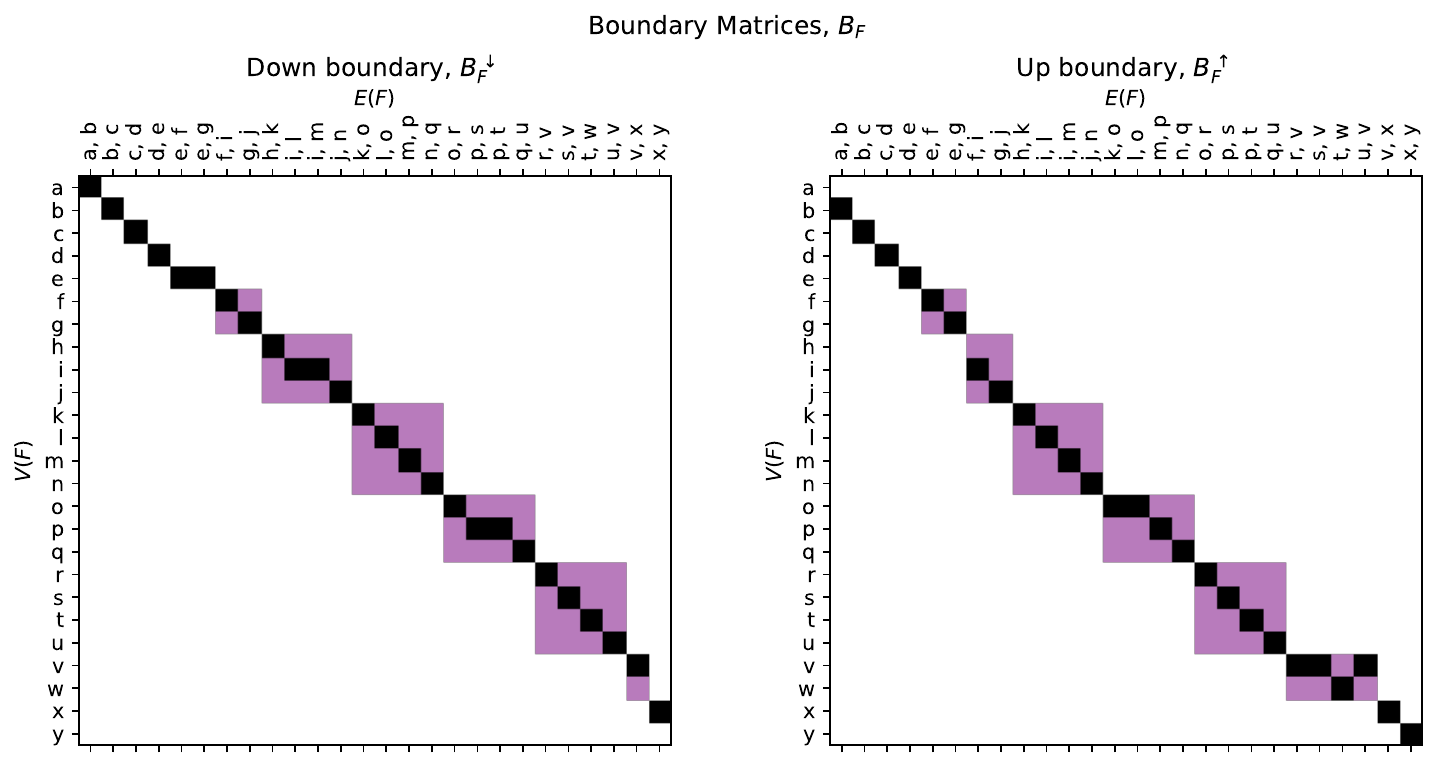}
        \includegraphics[width = .07\textwidth, align = c]{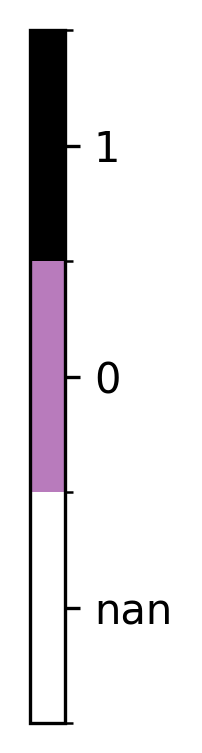}
        \caption{Example boundary matrices of the graph of $F$ shown in \cref{fig:example_all_graphs}.}
        \label{fig:boundary_matrix}
    \end{figure}
The boundary matrices describe the relationship between the edges and vertices of a graph representation of a given mapper cosheaf. 
For a mapper cosheaf $H \in \{F, F^n,  G, G^n \}$, this is a matrix $B_H \in \{0,1\}^{|E_H| \times |V_H|}$, with a 1 in entry $B_H[e,v]$ iff $v$ is an endpoint of $e$. 
Unlike the other matrices we have used, this matrix has two nonzero entries in each column. However, it can be decomposed into a sum of two block matrices, each with a single nonzero entry per column, as follows. 
Let $B_H^{\uparrow} \in \{0,1\}^{|E_H| \times |V_H|}$ have a 1 in entry $B[e,v]$ iff $v$ is an endpoint of $e$ for $v \in F(\sigma_i)$ and $e \in F(\tau_{i})$; let $B_H^{\downarrow} \in \{0,1\}^{|E_H| \times |V_H|}$ have a 1 in entry $B[e,v]$ iff $v$ is an endpoint of $e$ for $v \in F(\sigma_i)$ and $e \in F(\tau_{i-1})$. 
Then $B_H = B_H^{\uparrow} + B_H^{\downarrow}$; see \cref{fig:boundary_matrix} for an example. 
In this case, we will have eight matrices to work with, which we denote by 
$B_F^{\uparrow}$, $B_F^{\downarrow}$, 
$B_{F^n}^{\uparrow}$,  $B_{F^n}^{\downarrow}$,  
$B_G^{\uparrow}$, $B_G^{\downarrow}$, 
$B_{G^n}^{\uparrow}$, and $B_{G^n}^{\downarrow}$.
We do not compute the boundary matrices for either $F^{2n}$ or ${G}^{2n}$ since, as can be seen in \cref{tab:theList}, these matrices are never used.
Because these are simply matrix encodings of the adjacency information of the relevant graphs, they can be constructed in $O(V_{\max} \cdot E_{\max})$ time, where $V_{\max} = \max\{|V_H| \mid H \in F, F^n,  G, G^n  \}$ and $E_{\max} = \max\{|E_H| \mid H \in F, F^n,  G, G^n  \}$.

\subsubsection{Inclusion Matrices} 
\label{ssec:InclusionMatrices}
    \begin{figure}[!htb]
        \centering
        \includegraphics[width = .8\linewidth, align = c]{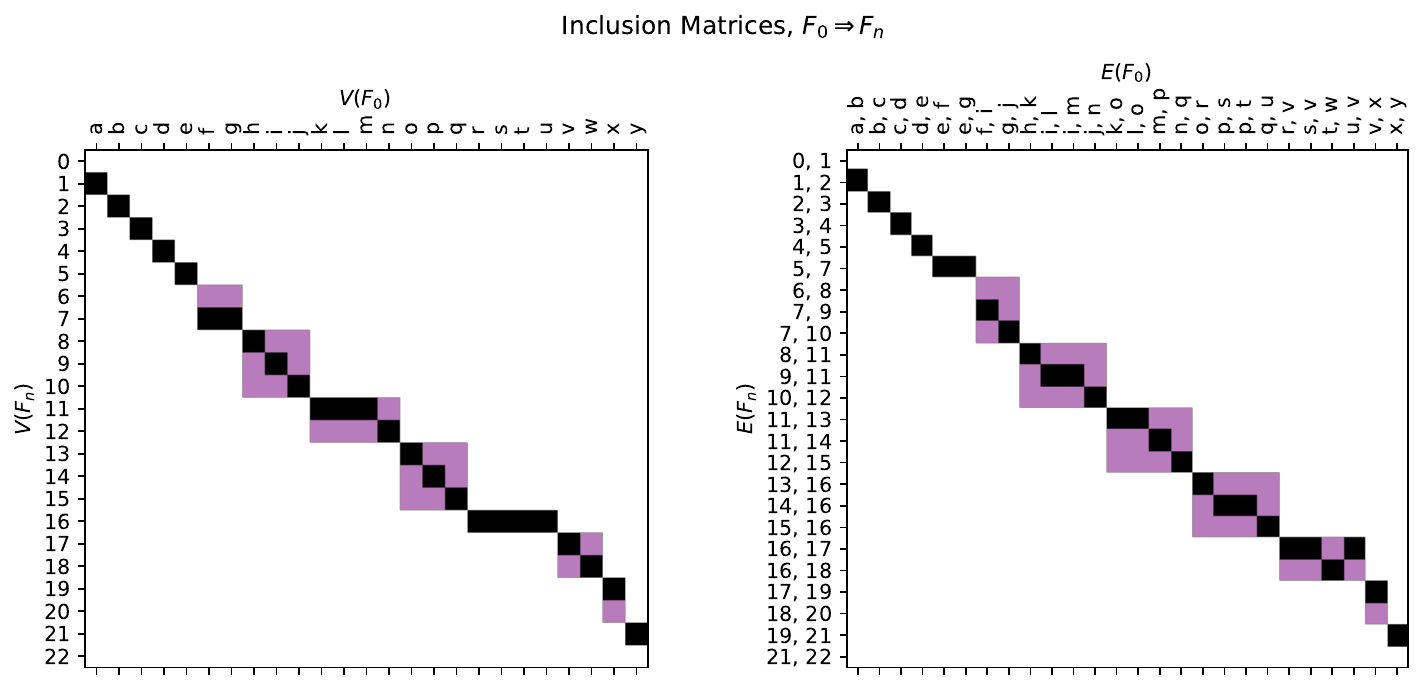}
        \includegraphics[width = .07\textwidth, align = c]{figures/colorbar-PurBl-binary.png}
        \caption{Inclusion matrices $I_F^V$ (left) and $I_F^E$ (right) representing $F \Rightarrow F_n$ from \cref{fig:example_all_graphs}.}
        \label{fig:inclusion_matrix}
    \end{figure}

The inclusion matrices represent the natural transformations $F \Rightarrow F_n \Rightarrow F_{2n}$ and $G  \Rightarrow G_n  \Rightarrow G_{2n}$.
Focusing on the first of these, $F \Rightarrow F^n$, the component morphisms take a vertex $v \in F(S_{\sigma_i})$ to the vertex $v' \in F^n(S_{\sigma_i})$ representing the connected component containing $v'$. 
The edge maps are determined similarly. 
    Thus the vertex inclusion matrix is written as $I_F^V \in \{0, 1 \}^{|V_{F^n}| \times |V_{F}|}$ with a 1 in entry $I_F[v',v]$ if $v$ is in the connected component represented by $v'$.
    A similar matrix is built for the edges, $I_F^E \in \{0, 1 \}^{|E_{F^n}| \times |E_{F}|}$. 
    The result is that each inclusion natural transformation is given by a pair of matrices, so from this construction we have the matrices $I_F^V$, $I_F^E$, $I_{F^n}^V$, $I_{F^n}^E$, $I_G^V$, $I_G^E$, $I_{G^n}^V$, $I_{G^n}^E$; see \cref{fig:inclusion_matrix} for an example. 
These matrices satisfy the block decomposition and have exactly one nonzero entry in each column. They also have the useful property that the inclusion functor from a graph to its $2n$-smoothing can be computed via matrix multiplication; for example, the induced map on vertices for $F \Rightarrow F_{2n}$ is given by $I_{F^n}^V I_F^V$. Moreover, since they track representatives of objects in the $n$-smoothed connected components, these matrices can be constructed alongside the smoothed graphs with only constant overhead.

\subsubsection{Distance Matrices}
\label{ssec:DistanceMatrices}
     \begin{figure}[!htb]
       \centering
        \includegraphics[width = .4\linewidth, align = c]{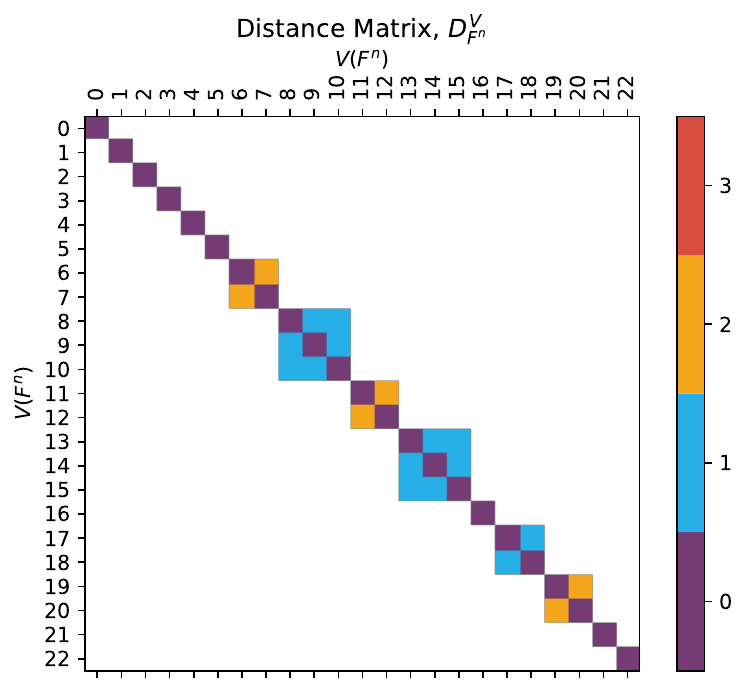}
        \includegraphics[width = .4\linewidth, align = c]{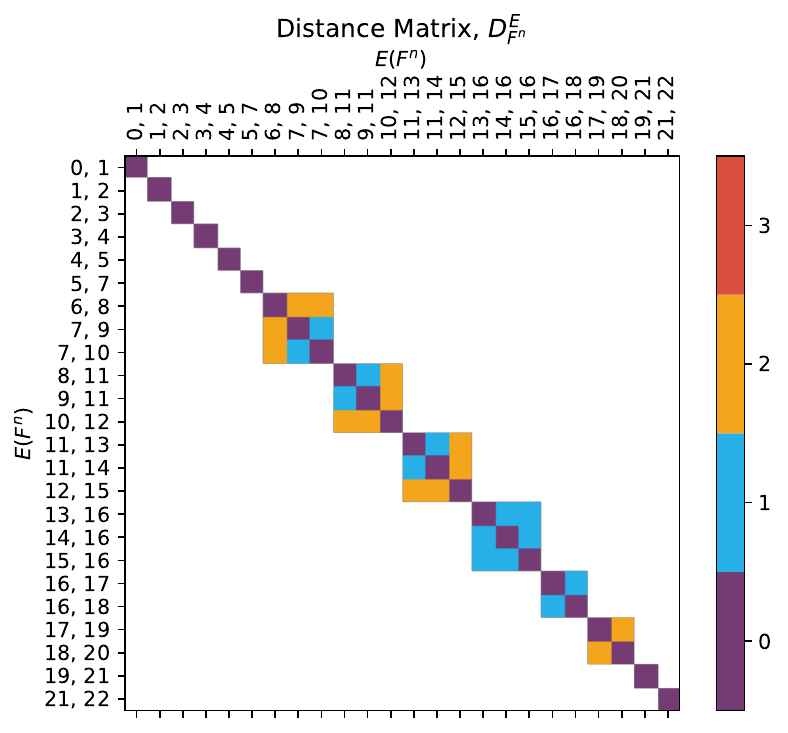}
        \caption{Distance matrix example for $F^n$ from \cref{fig:example_all_graphs}.}
        \label{fig:distance_matrix}
    \end{figure}
The distance matrices encode the distance from \cref{eqn:DistanceDefinition}. 
In particular, for vertices $v,w \in F(S_\rho)$, we seek the minimum $k$ such that they map to the same object in $F(S_\rho^k)$. 
Equivalently, this is the minimum $k$ where $v$ and $w$ are in the same connected component of the subgraph induced by the vertex set $\coprod_{S_{\sigma_i} \subseteq S_\rho^k} F(S_{\sigma_i})$.
For example, in the graph of $F^n$ from \cref{fig:example_all_graphs}, vertices $11$ and $12$, both at level $\sigma_9$, are not in the same connected component until $k=2$ using the path through vertex $7$ below. 
Similarly, edges $(12,15)$ and $(11,14)$ have distance 2 via the path through vertex $7$.

This distance is stored as a pair of matrices $D_F^V \in \{0,1\}^{|V(F)|^2}$ and $D_F^E\in \{0,1\}^{|E(F)|^2}$. 
For the vertices, $D_F^V[u,v] = d_{S_{\sigma_i}}(u,v)$ if $u,v \in F(S_{\sigma_i}) \subset V_F$, and 0 otherwise; $D_F^E$ is defined similarly for the elements of $F(S_{\tau_i})$.
These matrices inherit the block structure determined by the mapper cosheaf heights.
See \cref{fig:distance_matrix} for examples corresponding to $F^n$ from \cref{fig:example_all_graphs}. 
In total, we have 12 distance matrices, two each for the six input matrices. 
However, as noted in \cref{tab:theList}, only the eight for $F^n$, $F^{2n}$, $G^n$, and $G^{2n}$ need to be calculated. 

We now compute these distance matrices from the graph representations.
For each level $i \in [-L,L]$, we construct the block for $\sigma_i$ and $\tau_i$ in $D_F^V$  and $D_F^E$ respectively.
In the vertex case, we build a union-find data structure by adding the edges in order of increasing thickening. 
Recall (and see \cref{fig:PosetNotation} for notation reminder) that the elements of $F(S^k_{\sigma_i})$ correspond to the subgraph induced by the vertices in $\coprod _{j \in [i-n,i+n]} F(S_{\sigma_j})$, which includes the edges in $\coprod_{j \in [i-k, i+k-1]}F(S_{\tau_j})$. 
Therefore, for $k=1$, we add all edges corresponding to $\tau_{i-1}$ and $\tau_i$. Note that the edges for $\tau_{i-2}$ and $\tau_{i+1}$ are irrelevant here, as they do not involve any vertices in $S_{\sigma_j}^1$ and thus cannot affect connectivity. We then check whether any vertices in $F(S_{\sigma_i}^1)$ lie in the same connected component. For each pair $u$ and $v$ that do, we set $D_F^V[u,v] = 1$.
Then for $k=2$, we add the edges for $\tau_{i-2}$ and $\tau_{i+2}$, and repeat the same check for pairs that are not already in the same connected component at $k=1$, setting these distances to 2 if they are now in the same component. 
We continue until all pair distances are determined. 

A similar sweep for each level is used to construct $D_F^E$, with the difference being that the edges added at step $k$ are those from $F(\tau_{i-k+1}, \tau_{i+k-1})$. 
We have $V_{\max}$ entries in our union-find data structure, so amortized time per operation is $O(\alpha(V_{\max}))$, with a total of $E_{\max}$ operations.
Thus, since the union-find structure is built at each level, the pair of matrices for a given graph takes time $O(L\cdot E_{\max} \cdot \alpha(V_{\max}))$, where $\alpha(\cdot)$ is the inverse Ackermann function.

\subsubsection{Assignment Matrices}
\begin{figure}[!ht]
        \centering
        \includegraphics[width=.8\linewidth, align = c]{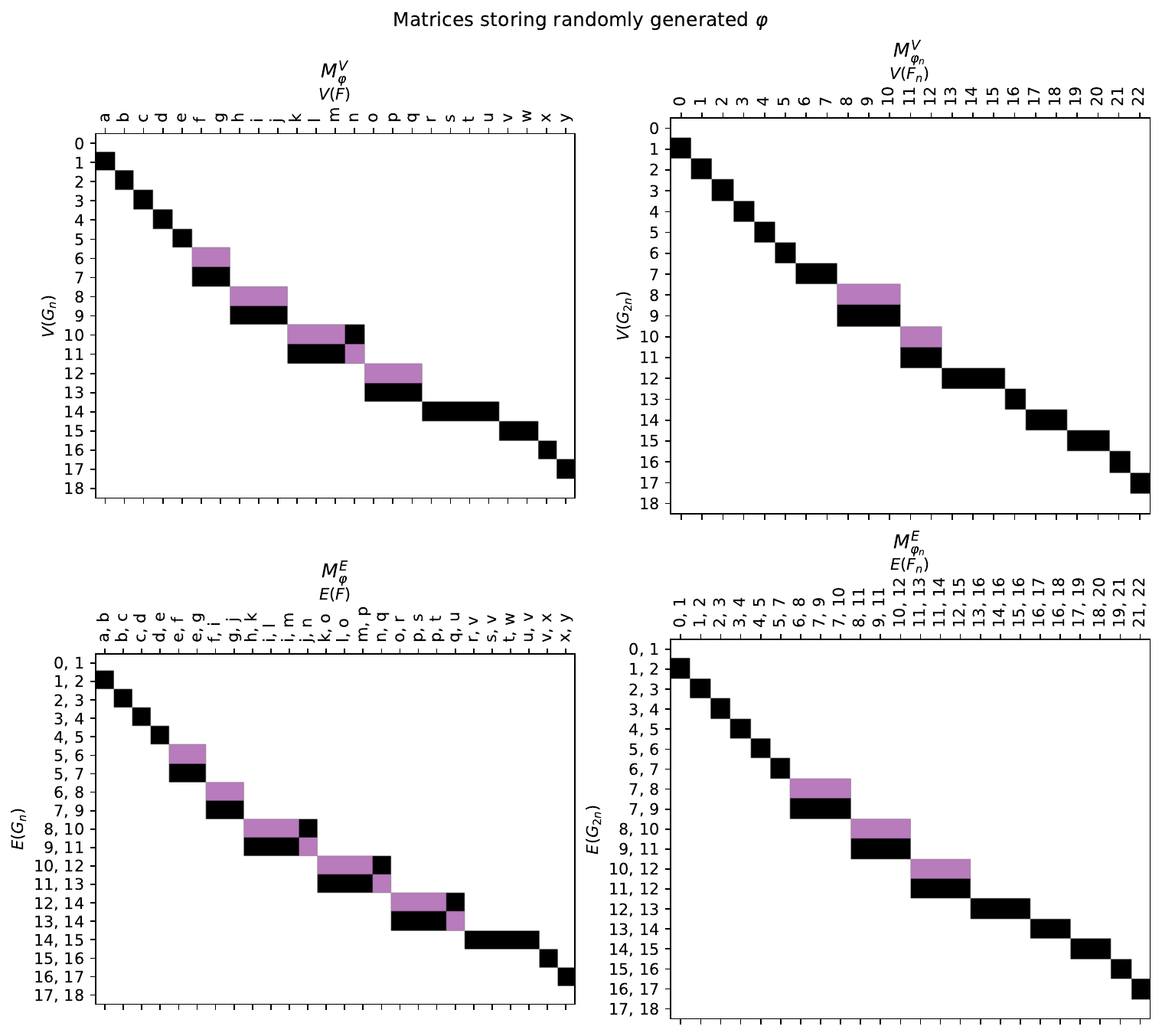}
        \includegraphics[width = .07\textwidth, align = c]{colorbar-PurBl-binary.png}
        \vspace{-4mm}
        \caption{Matrices for an input unnatural transformation $\phi$ of the example graphs in \cref{fig:example_all_graphs}. }
        \label{fig:assignment_matrix}
\end{figure}

To solve for assignment choices, we record them as matrices. Given an $n$-assignment $\phi$ and $\psi$, we first restrict each map to $S_\rho$ and $S_\rho^n$, and then separate these restrictions according to whether $\rho$ is a vertex ($\rho = \sigma_i$) or an edge ($\rho = \tau_i$). The four resulting matrices for $\phi$ (with analogous matrices for $\psi$) are denoted as 
\begin{equation*}
    \begin{matrix}
    M_\phi^V \in \{0,1\}^{|V(G^n)| \times |V(F)|}, & 
    M_{\phi_n}^V \in \{0,1\}^{|V(G^{2n})| \times |V(F^n)|}, \\ 
    M_\phi^E \in \{0,1\}^{|E(G^n)| \times |E(F)|}, &
    M_{\phi_n}^E \in \{0,1\}^{|E(G^{2n})| \times |E(F^n)|}. 
    \end{matrix}
    \end{equation*}
As with the other matrices, these are block matrices aligned with the $\sigma_i$ and $\tau_i$ indices, each containing exactly one 1 per column. The matrix $M_\phi^V$ satisfies $M_\phi^V[u,v] = 1$ iff $\phi_{S_{\sigma_i}}(u) = v$ for $u \in F(S_{\sigma_i})$ and $v \in G(S_{\sigma_i}^n)$. An example is shown in \cref{fig:assignment_matrix}, where $\phi$ is randomly generated so that each column contains a single 1; in this case, the associated parallelogram diagram fails to commute, indicating that $\phi$ is not a natural transformation. In our ILP formulation, we search over such assignment matrices to identify choices of $\phi$ and $\psi$ that minimize the loss. We need only initialize empty matrices in the setup.

\subsection[]{Binary Decision and Loss Computation for Fixed $\phi$ and $\psi$}

\begin{table}[]
    \centering
\begin{tabular}{|r|p{1cm}|c|c|c|}
\hline 
&   \centering{\textbf{Loss Term}}   & \textbf{Diagram} & \textbf{Matrix Multiplication} & \textbf{Eval.}\\ \hline \hline
\multirow{2}{*}{\rotatebox[origin=r]{90}{$\substack{\text{Edge-Vertex \hfill}\\\text{Parallelogram \hfill}}$}}
& $\Lpl^{S_\tau, S_\sigma}$
& 
    $
    \begin{tikzcd}[scale cd = .7]
    F(S_\tau)  
        \ar[r, "{F[\subseteq ]}"] 
        \ar[dr, "\varphi_{S_\tau}"']
    & F(S_\sigma)
        \ar[dr, "\varphi_{ S_\sigma}"]
    & \\
    & G^n(S_\tau) 
        \ar[r, "{G[\subseteq ]}"'] 
    & G^n(S_\sigma)  
    \end{tikzcd}
    $
&
    \parbox{4.5cm}{
\centering
        $D_{G^n}^V \left(M_{\phi}^V \cdot B_F^{\uparrow} - B_{G^n}^{\uparrow} \cdot M_{\phi}^E\right)$\\
        $D_{G^n}^V \left(M_{\phi}^V \cdot B_F^{\downarrow} - B_{G^n}^{\downarrow} \cdot M_{\phi}^E\right)$
   }
& \multirow{4}{*}[-11.5ex]{$\max(A)$}
\\
\cline{2-4}
& $\Lpr^{S_\tau, S_\sigma}$
&
    $
    \begin{tikzcd}[scale cd = .7]
    & F^n(S_\tau) 
        \ar[r, "{F[\subseteq ]}"] 
    & F^n(S_\sigma)  
    \\
    G(S_\tau)  
        \ar[r, "{G[\subseteq ]}"'] 
        \ar[ur, "\psi_{S_\tau}"]
    & G(S_\sigma)
        \ar[ur, "\psi_{ S_\sigma}"']
    \end{tikzcd}
    $
&
\parbox{4.5cm}{
\centering
    $D_{F^n}^V \left(M_{\psi}^V \cdot B_G^{\uparrow} - B_{F^n}^{\uparrow} \cdot M_{\psi}^E\right)$\\
    $D_{F^n}^V \left(M_{\psi}^V \cdot B_G^{\downarrow} - B_{F^n}^{\downarrow} \cdot M_{\psi}^E\right)$
}
& 
\\
\cline{2-4}
\multirow{2}{*}{\rotatebox[origin=r]{90}{$\substack{\text{Thickening \hfill}\\\text{Parallelogram \hfill}}$}}
& $\Lpl^{S_\rho, S_\rho^n}$
& 
    $
    \begin{tikzcd}[scale cd = .7]
    F(S_\rho)  
        \ar[r, "{F[\subseteq ]}"] 
        \ar[dr, "\varphi_{S_\rho}"']
    & F^n(S_\rho)
        \ar[dr, "\varphi_{ S_\rho^n}"]
    & \\
    & G^n(S_\rho) 
        \ar[r, "{G[\subseteq ]}"'] 
    & G^{2n}(S_\rho)  
    \end{tikzcd}
    $
&
\parbox{4.5cm}{
\centering
    $D_{G^{2n}}^V \left(M_{\phi^n}^V \cdot I_F^V - I_{G^n}^V \cdot M_{\phi}^V\right)$\\
    $D_{G^{2n}}^E \left(M_{\phi^n}^E \cdot I_F^E - I_{G^n}^E \cdot M_{\phi}^E\right)$
}
&
\\
\cline{2-4}
& $\Lpr^{S_\rho, S_\rho^n}$
&
    $
    \begin{tikzcd}[scale cd = .7]
    & F^n(S_\rho) 
        \ar[r, "{F[\subseteq ]}"] 
    & F^{2n}(S_\rho)  
    \\
    G(S_\rho)  
        \ar[r, "{G[\subseteq ]}"'] 
        \ar[ur, "\psi_{S_\rho}"]
    & G(S_\rho^n)
        \ar[ur, "\psi_{ S_\rho^n}"']
    \end{tikzcd}
    $
&
\parbox{4.5cm}{
\centering
    $D_{F^{2n}}^V \left(M_{\psi^n}^V \cdot I_G^V - I_{F^n}^V \cdot M_{\psi}^V\right)$\\
    $D_{F^{2n}}^E \left(M_{\psi^n}^E \cdot I_G^E - I_{F^n}^E \cdot M_{\psi}^E\right)$
}
& \parbox{3.5cm}{
\centering}
\\
\hline
\multirow{2}{*}[-0.4ex]{\rotatebox[origin=r]{90}{Triangle \hfill}}
& $\Ltd^{S_\rho}$
&
    $
    \begin{tikzcd}[scale cd = .7]
        F(S_\rho) 
            \ar[rr, "{F[ \subseteq ]}"]   
            \ar[dr, "\phi_{S_\rho}"'] 
            & & F^{2n}(S_\rho)
        \\
        & G^n(S_\rho)\ar[ur, "\psi_{S^n_\rho}"']  & 
    \end{tikzcd}
    $
&
    \parbox{4.5cm}{
\centering
    $D_{F^{2n}}^V \left(I_{F^n}^V \cdot I_F^V - M_{\psi^n}^V\cdot M_{\phi}^V \right)$\\
    $D_{F^{2n}}^E \left(I_{F^n}^E \cdot I_F^E - M_{\psi^n}^E\cdot M_{\phi}^E \right)$
    }
    & \multirow{2}{*}[-3.5ex]{\vspace{\fill}$\left\lceil \tfrac{1}{2}\max(A)\right\rceil$}
\\
\cline{2-4}
& $\Ltu^{S_\rho}$
&
    $
    \begin{tikzcd}[scale cd = .7]
        & F^n(S_\rho)\ar[dr, "\phi_{S^n_\rho}"]  & 
        \\
        G(S_\rho) 
            \ar[rr, "{G[ \subseteq ]}"']   
            \ar[ur, "\psi_{S_\rho}"] 
            & & G^{2n}(S_\rho)
    \end{tikzcd}
    $
&
   \parbox{4.5cm}{
\centering
    $D_{G^{2n}}^V \left(I_{G^n}^V \cdot I_G^V - M_{\phi^n}^V\cdot M_{\psi}^V \right)$\\
    $D_{G^{2n}}^E \left(I_{G^n}^E \cdot I_G^E - M_{\phi^n}^E\cdot M_{\psi}^E \right)$
    }
& 
\\
\hline
\end{tabular}
\caption{The list of loss terms, relevant diagram, and matrix multiplications. The final column shows the result from the final matrix, where the multiplied matrix is denoted $A$: either the maximum entry in the multiplied matrix, or the ceiling of half the maximum.} 
\label{tab:theList}
\end{table}

In this section, we use the matrices introduced above to begin addressing \cref{prob:BinaryDecision} and \cref{prob:LossComputation}. 
Throughout, we use the notation $\max(A)$ to denote the maximum entry in the matrix $A$.
For the binary decision problem, we focus on the version of the problem for a fixed assignment as follows.
We give the answers for these two problems by showing the matrix multiplications needed to check the commutativity of all diagrams (for \cref{prob:BinaryDecisionFixedAssignment}); or for computing each term in the loss function (for \cref{prob:LossComputationFixedAssignment}). 
In order to determine if $\phi,\psi$ is an interleaving and thus answer \cref{prob:BinaryDecisionFixedAssignment}, we need to check commutativity of all the diagrams in the second column of \cref{tab:theList}. 
Due to symmetry, we limit our discussion to checking commutativity of the diagrams  $\Parallelograml_\phi(S_\tau,S_\sigma)$, $\Parallelograml_\phi(S_\rho,S_\rho^n)$, and $\triangled(S_\rho)$. 
For \cref{prob:LossComputationFixedAssignment}, we need to show how  each term in   
\begin{equation*}
L_B(\phi,\psi) = \max_{ \substack{\sigma < \tau \in K \\ \rho \in K}}
\left\{
\Lpl^{S_\tau, S_\sigma}, \Lpr^{S_\tau, S_\sigma},
\Lpl^{S_\rho, S_\rho^n}, \Lpr^{S_\rho, S_\rho^n},
\Ltu^{S_\rho}, \Ltd^{S_\rho}
\right\}
\end{equation*}
corresponds to matrix multiplication in this setting.
Again, due to symmetry, we limit our discussion to $\Lpl^{S_\tau, S_\sigma}, \Lpl^{S_\rho, S_\rho^n}$, and $\Ltd^{S_\rho}$; the full list can be found in \cref{tab:theList}.
For a matrix $A$, we write $\max(A) \coloneqq \max_{i,j} A_{ij}$  to denote the maximum entry of $A$.

\subsubsection{Parallelograms}
 First, consider the diagram $\Parallelograml_\phi(S_\tau,S_\sigma)$ given by 
  \begin{equation*}
    \begin{tikzcd}
    F(S_{\tau})  
        \ar[r, "{F[\subseteq ]}"] 
        \ar[dr, "\varphi_{S_\tau}"']
    & F(S_\sigma)
        \ar[dr, "\varphi_{ S_\sigma}"]
    & \\
    & G^n(S_\tau) 
        \ar[r, "{G[\subseteq ]}"'] 
    & G^n(S_\sigma)  
    \end{tikzcd}
 \end{equation*}
 In our previous notation, we have that $\sigma_i$ is the lower vertex of $\tau_i$, and $\sigma_{i+1}$ is the upper vertex. 
 
\begin{figure}
    \centering
    \includegraphics[width=\linewidth]{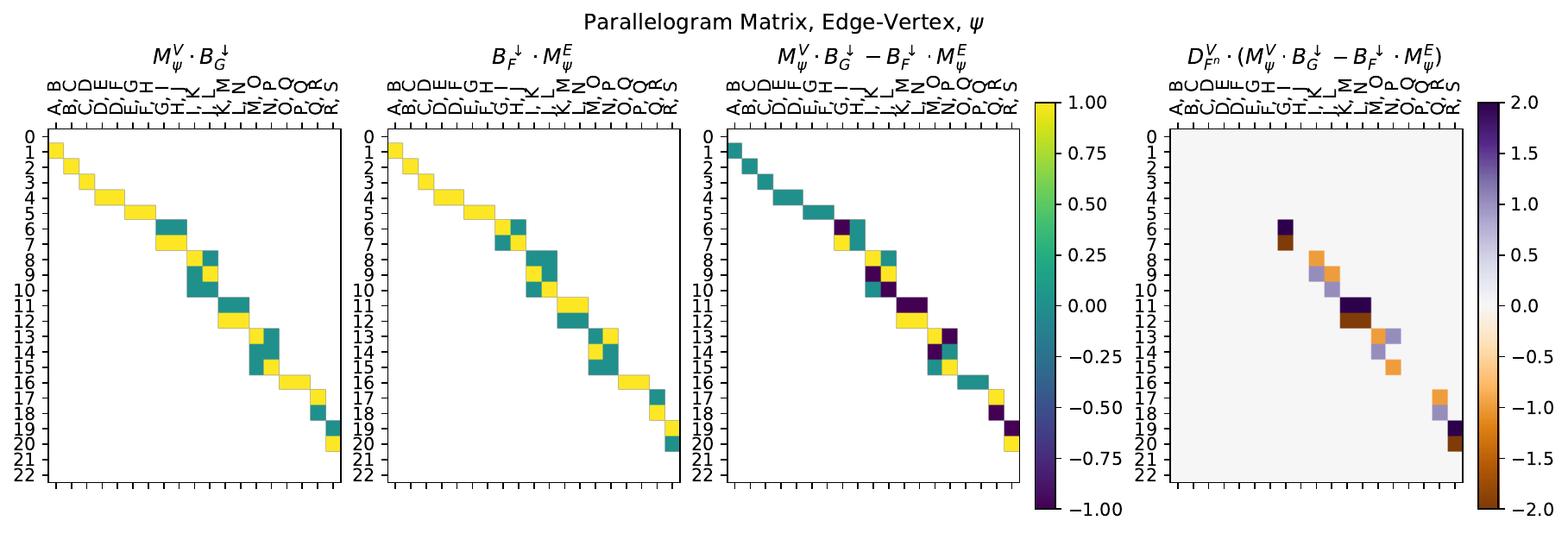}
    \caption{Example of the matrix multiplications used to to compute the loss function for the $\Lpl(S_\tau,S_\sigma)$ case (\cref{lem:matrix_entries_edge_vert_down_binary,lem:matrix_entries_edge_vert_down}). 
    White entries in the first three matrices are outside of the block structure, and can be assumed to be 0 although they are stored as \texttt{NaN}. 
    The loss in this case is the maximum absolute entry in the last matrix.
    } 
    \label{fig:multiplicationExamples_Parallelogram}
\end{figure}

\begin{lemma}
\label{lem:matrix_entries_edge_vert_down_binary}
The diagrams $\{\Parallelograml_\phi(S_{\tau_i},S_{\sigma_i})\}_i$ all commute if and only if 
\begin{equation*}
M_{\phi}^V \cdot B_F^{\downarrow} - B_{G^n}^{\downarrow} \cdot M_{\phi}^E = 0.
\end{equation*}
The diagrams $\{\Parallelograml_\phi(S_{\tau_{i+1}},S_{\sigma_i})\}_i$ all commute if and only if 
\begin{equation*}
M_{\phi}^V \cdot B_F^{\uparrow} - B_{G^n}^{\uparrow} \cdot M_{\phi}^E = 0.
\end{equation*}
\end{lemma} 

See \cref{fig:multiplicationExamples_Parallelogram} for examples of the matrices from the statement.

\begin{proof}
We will focus on proving the first statement since the methodology is the same for the second. 
Fix an $i$.  
Throughout the proof we write $\sigma = \sigma_i$ and $\tau = \tau_i$. 
We are checking whether the images landing in the bottom right of the diagram
 \begin{equation*}
    \begin{tikzcd}
    F(S_{\tau})  
        \ar[r, "{F[\subseteq ]}"] 
        \ar[dr, "\varphi_{S_\tau}"']
    & F(S_\sigma)
        \ar[dr, "\varphi_{ S_\sigma}"]
    & \\
    & G^n(S_\tau) 
        \ar[r, "{G[\subseteq ]}"'] 
    & G^n(S_\sigma)  
    \end{tikzcd}
 \end{equation*}
 are the same. 
 Fix any element $e \in F(S_{\tau})$. 
 If it has lower vertex $v \in F(S_\sigma)$ with $\phi_{S_\sigma}(v) = v'$, then the composition along the top of the diagram sends $e \mapsto v'$. 
 This means that the column for $e$ in $M_{\phi}^V \cdot B_F^{\downarrow}$ has exactly one 1 in the row corresponding to $v'$. 
If $\phi_{S_\tau}(e) = e' \in G^n(S_\tau)$ which has lower vertex $v''$, then the composition along the bottom of the diagram sends $e \mapsto v''$. 
Similarly, the column for $e$ in $B_{G^n}^{\downarrow} \cdot M_{\phi}^E$ has exactly one 1 in the row corresponding to $v''$. 
Writing $A\coloneqq M_{\phi}^V \cdot B_F^{\downarrow} - B_{G^n}^{\downarrow} \cdot M_{\phi}^E$, the column for $e$ in $A$ is entirely 0 if and only if $v' = v''$. 
This means that the matrix $A$ is entirely 0 iff $G[\subseteq]\circ \phi_{S_\tau}(e) =  \phi_{S_\sigma}\circ F[\subseteq](e)$ for all $e \in F(S_\tau)$ which is the definition of the diagram commuting. 
\end{proof}

We next use the above construction to determine  $\Lpl^{S_\tau, S_\sigma}$.

 \begin{lemma}
\label{lem:matrix_entries_edge_vert_down}
    $\max_{i} \Lpl^{S_{\tau_i}, S_{\sigma_i}} = \max \left( D_{G^n}^V \left(M_{\phi}^V \cdot B_F^{\downarrow} - B_{G^n}^{\downarrow} \cdot M_{\phi}^E\right) \right)$
\end{lemma}

\begin{proof} 
Fix an $i$ and again write $\sigma = \sigma_i$ and $\tau = \tau_i$. 
Following the notation of the proof of \cref{lem:matrix_entries_edge_vert_down_binary}, given an $e \in F(S_{\tau})$ we write $v' = \phi_{S_\sigma}\circ F[\subseteq](e)$ and $v'' = G[\subseteq]\circ \phi_{S_\tau}(e) $. 
From the discussion above, this means that the column for $e$ in $M_{\phi}^V \cdot B_F^{\downarrow}$ has exactly one 1 in the row corresponding to $v'$ and the column for $e$ in $B_{G^n}^{\downarrow} \cdot M_{\phi}^E$ has exactly one 1 in the row corresponding to $v''$. 
This means that, writing $A\coloneqq M_{\phi}^V \cdot B_F^{\downarrow} - B_{G^n}^{\downarrow} \cdot M_{\phi}^E$, the column for $e$ in $A$ is either entirely 0 if $v' = v''$ or its only non-zero entries are $A[v',e] = 1$ and $A[v'',e] = -1$. 

We can  left-multiply the matrix $A$ by the distance matrix $D\coloneqq D_{G^{n}}^V$.
Clearly, if $v' = v''$ and thus the column in $A$ for $e$ is 0, the column for $e$ in $D \cdot A$ is 0. 
If $v' \neq v''$, the entries for the column $e$ are 
\begin{equation*}
    (D \cdot A)[u,e] = 
    \begin{cases}
        -d(v',v'') & \text{if } u = v'\\
        d(v',v'') & \text{if } u = v'' \\
        d(u,v') - d(u,v'') & \text{else}
    \end{cases}
\end{equation*}
where $d$ denotes the distance function $d^G_{S^n_\sigma}$. 
However, since $d(u,v') - d(u,v'') \leq d(v',v'')$ by the triangle inequality, the maximum entry in the column occurs at entry $[v'', e]$. 
This means that $\Lpl^{S_{\tau_i}, S_{\sigma_i}}$ is the maximum in the columns corresponding to $e \in F(S_{\tau_i})$. 
Further, the maximum of $\Lpl^{S_{\tau_i}, S_{\sigma_i}}$ over all $i$ is the maximum over all columns, so the lemma follows. 
\end{proof} 

A similar argument gives the following lemma for the upper vertex rather than the lower vertex of $\tau_i$. 
\begin{lemma}
\label{lem:matrix_entries_edge_vert_up}
    $\max_{i} \Lpl^{S_{\tau_i}, S_{\sigma_{i+1}}} = \max \left( D_{G^n}^V \left(M_{\phi}^V \cdot B_F^{\uparrow} - B_{G^n}^{\uparrow} \cdot M_{\phi}^E\right) \right)$
\end{lemma}

Next we consider the thickening matrix type, $\Lpl^{S_\rho, S_\rho^n}$. In this notation, $\rho$ can be either an edge $\tau_i$ or a vertex $\sigma_i$; we start with the vertex case $\rho = \sigma_i$. 

\begin{lemma}
\label{lem:matrix_entries_parallel_thickening}
The diagrams $\{\Parallelograml_\phi(S_{\sigma_i},S_{\sigma_i}^n)\}_i$ all commute if and only if 
\begin{equation*}
M_{\phi^n}^V \cdot I_F^V - I_{G^n}^V \cdot M_{\phi}^V= 0.
\end{equation*}
The loss is determined by
\begin{equation*}
\max_{i} \Lpl^{S_{\sigma_{i}},S_{\sigma_{i}}^n} = \max \left(D_{G^n}^V \left(M_{\phi^n}^V \cdot I_F^V - I_{G^n}^V \cdot M_{\phi}^V\right) \right).
\end{equation*}
\end{lemma}

\begin{proof} %
We are looking for the distance between images in the bottom right of 
\begin{equation*}
\begin{tikzcd}
F(S_{\sigma_i})  
    \ar[r, "{F[\subseteq ]}"] 
    \ar[dr, "{\varphi_{S_{\sigma_i}}}"']
& F^n(S_{\sigma_i})
    \ar[dr, "{\varphi_{ S_{\sigma_i}^n}}"]
& \\
& G^n(S_{\sigma_i}) 
    \ar[r, "{G[\subseteq ]}"'] 
& G^{2n}(S_{\sigma_i})  .
\end{tikzcd}
\end{equation*} 
By a similar argument to \cref{lem:matrix_entries_edge_vert_down}, if we have elements in the diagram of the form
\begin{equation*}
\begin{tikzcd}
v \ar[r, mapsto] \ar[dr, mapsto] & v' \ar[dr, mapsto]\\ 
& \substack{\\w} \ar[r, mapsto, shift right] & \substack{w'\\w''}
\end{tikzcd}
\end{equation*}
then column corresponding to $v$ in  $A = M_{\phi}^V \cdot B_F^{\downarrow} - B_{G^n}^{\downarrow} \cdot M_{\phi}^E$ will be 0 iff $w'=w''$, so the diagrams all commute iff $A = 0$. 
Likewise the column for $v$ in  
$D_{G^n}^V \left(M_{\phi}^V \cdot B_F^{\downarrow} - B_{G^n}^{\downarrow} \cdot M_{\phi}^E\right)$ 
will be entirely 0 if $w' = w''$ and will have maximum entry $d^{G^{2n}}_{S_{\sigma_i}}(w', w'')$ in the entry $[w'',v]$ otherwise. 
The lemma follows.
\end{proof}

Again, a parallel argument shows the edge thickening diagram contribution. 

\begin{lemma}
\label{lem:parallelogramcommute}
The diagrams $\{\Parallelograml_\phi(S_{\tau},S_{\tau}^n)\}_i$ all commute if and only if 
\begin{equation*}
M_{\phi^n}^E \cdot I_F^V - I_{G^n}^E \cdot M_{\phi}^E= 0.
\end{equation*}
The loss is determined by
\begin{equation*}
\max_{i} \Lpl^{S_{\tau_{i}},S_{\tau_{i}}^n} = \max \left(D_{G^n}^E \left(M_{\phi^n}^E \cdot I_F^V - I_{G^n}^E \cdot M_{\phi}^E\right) \right).
\end{equation*}
\end{lemma}

\subsubsection{Triangles}
The last type of loss contribution is that of the triangle diagrams, and the representation of the commutativity of the diagram is similar to the two previous examples. 
However, the one difference in this case is that the resulting loss function is the ceiling of half of the maximum, so in this case we have the following. 

\begin{lemma}
\label{lem:trianglecommute}    
The diagrams $\{\triangled_\phi(S_{\sigma_i})\}_i$ all commute if and only if 
\begin{equation*}
I_{F^n}^V \cdot I_F^V - M_{\psi^n}^V\cdot M_{\phi}^V= 0.
\end{equation*}
The diagrams  $\{\triangleu_\phi(S_{\tau_i})\}_i$ all commute if and only if 
\begin{equation*}
I_{F^n}^E \cdot I_F^E - M_{\psi^n}^E\cdot M_{\phi}^E = 0.
\end{equation*}
The loss in each case is determined by
\begin{align*}
\max_{i} \Ltd^{S_{\sigma_{i}},} &= \left\lceil\frac{1}{2}\max\left(D_{F^{2n}}^V \left(I_{F^n}^V \cdot I_F^V - M_{\psi^n}^V\cdot M_{\phi}^V \right) \right) \right\rceil,\\
\max_{i} \Ltd^{S_{\tau_{i}},} &= \left\lceil\frac{1}{2}\max\left(D_{F^{2n}}^V \left(I_{F^n}^V \cdot I_F^V - M_{\psi^n}^V\cdot M_{\phi}^V \right) \right)\right\rceil.
\end{align*}
\end{lemma}

\begin{proof}
We start with the vertex case for $S_{\sigma_i}$. 
We are checking commutativity of diagrams of the form 
\begin{equation*}
    \begin{tikzcd}
        F(S_{\sigma_i}) 
            \ar[rr, "{F[ \subseteq ]}"]   
            \ar[dr, "\phi_{S_{\sigma_i}}"'] 
            & & F^{2n}(S_{\sigma_i})
        \\
        & G^n(S_{\sigma_i})\ar[ur, "\psi_{S^n_{\sigma_i}}"'].  & 
    \end{tikzcd}
\end{equation*}
Fix a $v \in F(S_{\sigma_i})$, and let $v' \in F^{2n}(S_{\sigma_i})$ be the vertex image along the top of the diagram. 
Let $w = \phi_{S_{\sigma_i}}(v) \in G^n(S_{\sigma_i})$ and $v'' = \psi_{S_{\sigma_i}^n}(w')$ so that checking commutativity of the diagram means checking if $v' = v''$. 

The top map is given in matrix form by $I_{F^n}^V \cdot I_F^V$, so by definition the column for $v$ has exactly one entry 1 in row $v''$. 
For the matrix $M_{\psi^n}^V\cdot M_{\phi}^V$, note that 
the column for $v$ in $M_{\phi}^V$ has exactly one 1 at row $w$, and 
the column for $w$ in $M_{\psi^n}^V$ has exactly one 1 at row $v''$, so the column for $v$ in $M_{\psi^n}^V\cdot M_{\phi}^V$ has exactly one 1 in row $v''$. 
Thus, if $v' = v''$, the column for $v$ in $A = I_{F^n}^V \cdot I_F^V - M_{\psi^n}^V\cdot M_{\phi}^V $ is entirely zero.
Therefore the diagrams $\{\triangled_\phi(S_{\sigma_i})\}_i$ commute if and only if $A = 0$. 
The argument for $\{\triangled_\phi(S_{\tau_i})\}_i$ is the same.

To check the loss function, we again start by focusing on the vertex case. 
As before, we left-multiply the matrix $A$ by the distance matrix $D\coloneqq D_{F^{2n}}^V$.
If $v' = v''$, and thus the column in $A$ for $v$ is 0, the column for $v$ in $D \cdot A$ is also 0. 
If $v' \neq v''$, the entries for the column $v$ are 
\begin{equation*}
(D \cdot A)[u,v] = 
\begin{cases}
    -d(v',v'') & \text{if } u = v'\\
    d(v',v'') & \text{if } u = v'' \\
    d(u,v') - d(u,v'') & \text{else}
\end{cases}
\end{equation*}
where $d$ denotes the distance function $d^{F}_{S^{2n}_\sigma}$, and as before, this column has maximum entry $d(v',v'')$. 
By \cref{df:Loss_by_diagram}, we have that 
\begin{equation*}
\Ltd^S (\phi,\psi)
    = \max\limits_{\alpha \in F(S)}  \Big \lceil \tfrac{1}{2} \cdot d_{S^{2n}}^{F}\left(
    F[S \subseteq S^{2n}] (\alpha),
    \psi_{S^n} \circ \varphi_S(\alpha)
    \right) \Big \rceil
\end{equation*}
so the maximum is taken over $\lceil \tfrac{1}{2} d(v',v'')\rceil$ for all potential inputs $v$. 
The result is that the loss function is $\left\lceil\tfrac{x}{2}\right\rceil$ over all entries in $D \cdot A$ as required. 
As the computation for $S_{\tau_i}$ is the same, we omit it. 
\end{proof}
\cref{lem:matrix_entries_edge_vert_down_binary,lem:matrix_entries_edge_vert_down,lem:matrix_entries_edge_vert_up,lem:parallelogramcommute,lem:trianglecommute,lem:matrix_entries_parallel_thickening} together establish that for fixed $n$, the loss is zero if and only if all diagrams commute, giving the following corollary.
\begin{corollary}
\label{cor:p2_p3_equiv} The problems \cref{prob:BinaryDecision} and \cref{prob:LossComputation} are equivalent. 
\end{corollary}
In practice, this means that either problem may be solved to compute an optimized upper bound on the interleaving distance $d_I(F,G)$. 

\subsubsection{Computing \cref{prob:BinaryDecisionFixedAssignment} and \cref{prob:LossComputationFixedAssignment}}
The result of the lemmata in the previous subsections is that \cref{prob:BinaryDecisionFixedAssignment} and \cref{prob:LossComputationFixedAssignment} can be explicitly answered using the matrices listed in \cref{tab:theList}, as we will see in the two theorems of this section. 
To address \cref{prob:BinaryDecisionFixedAssignment}, observe that if the matrix products (excluding the distance matrix) are identically zero, then the given assignment defines a natural transformation. 

\begin{theorem}
\label{thm:interleaving}
An input $n$-assignment, $\phi$ and $\psi$, constitutes an $n$-interleaving if and only if
\begin{align*}
    \max \Big\{
    & M_{\phi}^V \cdot B_F^{\uparrow} - B_{G^n}^{\uparrow} \cdot M_{\phi}^E, 
    \quad M_{\phi}^V \cdot B_F^{\downarrow} - B_{G^n}^{\downarrow} \cdot M_{\phi}^E,\\
    &M_{\psi}^V \cdot B_G^{\uparrow} - B_{F^n}^{\uparrow} \cdot M_{\psi}^E, 
    \quad M_{\psi}^V \cdot B_G^{\downarrow} - B_{F^n}^{\downarrow} \cdot M_{\psi}^E,\\
    & M_{\phi^n}^V \cdot I_F^V - I_{G^n}^V \cdot M_{\phi}^V, 
    \quad M_{\phi^n}^E \cdot I_F^E - I_{G^n}^E \cdot M_{\phi}^E,\\
    & M_{\psi^n}^V \cdot I_G^V - I_{F^n}^V \cdot M_{\psi}^V, 
    \quad  M_{\psi^n}^E \cdot I_G^E - I_{F^n}^E \cdot M_{\psi}^E,
    \Big\} = 0
\end{align*}
\end{theorem}

\begin{proof}
The proof follows immediately from 
\cref{lem:matrix_entries_edge_vert_down_binary,lem:matrix_entries_parallel_thickening,lem:parallelogramcommute,lem:trianglecommute} and their symmetric counterparts.
\end{proof}

The loss function \cref{eq:loss} can be computed by taking the maximum over the entries of the matrices listed in \cref{tab:theList}, thus giving an answer to \cref{prob:LossComputationFixedAssignment}. 
Explicitly, we have the following theorem.

\begin{theorem}
\label{thm:LossEquivMatrices}
For a given $n$-assignment $(\phi, \psi)$,
\begin{align*}
    L_B(\phi,\psi) &= \\
    \max \Big\{  
    & D_{G^n}^V \left(M_{\phi}^V \cdot B_F^{\uparrow} - B_{G^n}^{\uparrow} \cdot M_{\phi}^E\right), \quad D_{G^n}^V \left(M_{\phi}^V \cdot B_F^{\downarrow} - B_{G^n}^{\downarrow} \cdot M_{\phi}^E\right),\\
    &D_{F^n}^V \left(M_{\psi}^V \cdot B_G^{\uparrow} - B_{F^n}^{\uparrow} \cdot M_{\psi}^E\right), \quad D_{F^n}^V \left(M_{\psi}^V \cdot B_G^{\downarrow} - B_{F^n}^{\downarrow} \cdot M_{\psi}^E\right),\\
    & D_{G^{2n}}^V \left(M_{\phi^n}^V \cdot I_F^V - I_{G^n}^V \cdot M_{\phi}^V\right), \quad D_{G^{2n}}^E \left(M_{\phi^n}^E \cdot I_F^E - I_{G^n}^E \cdot M_{\phi}^E\right),\\
    & D_{F^{2n}}^V \left(M_{\psi^n}^V \cdot I_G^V - I_{F^n}^V \cdot M_{\psi}^V\right), \quad D_{F^{2n}}^E \left(M_{\psi^n}^E \cdot I_G^E - I_{F^n}^E \cdot M_{\psi}^E\right),\\
    & 
     \left\lceil \tfrac{1}{2} D_{F^{2n}}^V \left(I_{F^n}^V \cdot I_F^V - M_{\psi^n}^V\cdot M_{\phi}^V \right) \right\rceil, 
    \quad 
     \left\lceil \tfrac{1}{2} D_{F^{2n}}^E \left(I_{F^n}^E \cdot I_F^E - M_{\psi^n}^E\cdot M_{\phi}^E \right)\right\rceil,\\
    & 
     \left\lceil \tfrac{1}{2} D_{G^{2n}}^V \left(I_{G^n}^V \cdot I_G^V - M_{\phi^n}^V\cdot M_{\psi}^V \right)\right\rceil, 
    \quad 
     \left\lceil \tfrac{1}{2} D_{G^{2n}}^E \left(I_{G^n}^E \cdot I_G^E - M_{\phi^n}^E\cdot M_{\psi}^E \right)\right\rceil
    \Big\}
\end{align*}
where the $\max$ is taken over all entries in all matrices, and $\lceil \cdot \rceil$ denotes the entry-wise ceiling of the matrix inside.
\end{theorem}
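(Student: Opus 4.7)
The plan is to derive this theorem as a direct assembly of the six lemmas established in the preceding subsection, together with the symmetric versions obtained by swapping the roles of $F$ and $G$ (equivalently, $\phi$ and $\psi$). Recall that by \cref{shortdef}, the basis loss function is the maximum over six families of terms: the two edge-vertex parallelogram types $\Lpl^{S_\tau, S_\sigma}$ and $\Lpr^{S_\tau, S_\sigma}$, the two thickening parallelogram types $\Lpl^{S_\rho, S_\rho^n}$ and $\Lpr^{S_\rho, S_\rho^n}$, and the two triangle types $\Ltd^{S_\rho}$ and $\Ltu^{S_\rho}$. Since a maximum over a union of finite sets is the maximum of the maxima over each set, it suffices to identify each of these six families with a single matrix expression.

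First I would handle the edge-vertex parallelogram contributions. For a given edge $\tau_i$, the adjacent vertex $\sigma$ is either $\sigma_i$ (the lower endpoint) or $\sigma_{i+1}$ (the upper endpoint), and these two subcases are already handled by the two lemmas using $B_F^{\downarrow}$ and $B_F^{\uparrow}$ respectively. Taking the maximum over both $i$ and the choice of lower/upper endpoint gives the maximum over the union of the corresponding two matrix products. The $\Lpr$ version follows identically upon interchanging the roles of $(F,\phi)$ and $(G,\psi)$, which swaps $B_F^{\uparrow/\downarrow}$ with $B_G^{\uparrow/\downarrow}$ and $M_\phi$ with $M_\psi$; this produces the two matrices involving $M_\psi$ and $B_G$. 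No new argument is required because the proof of \cref{lem:matrix_entries_edge_vert_down} is symmetric in the cosheaves.

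Next I would cover the thickening parallelogram case. The vertex subcase $\rho = \sigma_i$ is given by the lemma involving $D_{G^n}^V$, $M_{\phi^n}^V$, $I_F^V$, and $M_\phi^V$. The edge subcase $\rho = \tau_i$ is proven by the analogous lemma; the argument is structurally identical, simply using the edge-indexed versions of the boundary/inclusion matrices. Taking the union of these two matrix sets captures all $\rho \in K$. The $\Lpr$ counterpart again follows by the symmetric $F \leftrightarrow G$ argument. Finally, the two triangle contributions $\Ltd^{S_\rho}$ and $\Ltu^{S_\rho}$ are handled by the last pair of lemmas; the sole distinction here is that the loss term is $\lceil x/2 \rceil$ rather than $x$, which accounts for why those four matrix products appear under a separate $\lceil \cdot/2 \rceil$ in the final expression.

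The only subtlety worth flagging, rather than a true obstacle, is verifying that the maximum entry of each matrix product indeed realizes the maximum of the corresponding diagram loss (and not merely an upper bound). This was the content of the triangle-inequality step in the proof of \cref{lem:matrix_entries_edge_vert_down}: left-multiplication by the distance matrix $D$ produces column entries of the form $d(u,v')-d(u,v'')$, and the triangle inequality guarantees these are dominated by $d(v',v'')$, which appears in the $[v'',e]$ entry. This same argument transfers verbatim to every other case, so once the six lemmas are combined the theorem is obtained by simply writing out the union of the eight matrix sets for the outer max and the four matrix sets for the $\lceil \cdot/2 \rceil$ max.
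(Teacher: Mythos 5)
Your proposal is correct and matches the paper's treatment: the theorem is stated there as an immediate assembly of the preceding lemmas (together with their $F\leftrightarrow G$ symmetric counterparts), exactly as you describe, with the only substantive content being the per-lemma verification that the maximum matrix entry equals (not merely bounds) the diagram loss via the triangle-inequality step you flag. No gaps.
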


\begin{proof}
    The proof follows from \cref{lem:matrix_entries_edge_vert_down,lem:matrix_entries_edge_vert_up,lem:parallelogramcommute,lem:trianglecommute,lem:matrix_entries_parallel_thickening} and their symmetric counterparts.
\end{proof}

\section{Implementation: Integer Linear Programming}
\label{sec:ILP}

Our matrix formulation above leads to the natural question: can we formulate loss function computation as a linear program?
Indeed, we will show that both the objective function and constraints
of the loss function optimization problem can be expressed as linear relationships.
By \cref{thm:LossEquivMatrices}, evaluating the loss function reduces to finding the maximum entries across a collection of matrix products listed in \cref{tab:theList}. The key idea is to treat the entries of all $M_\phi$ and $M_\psi$ matrices as variables and then solve a linear program that selects these entries so as to minimize the resulting loss.
However, due to the discrete nature of the problem, the resulting formulation takes the form of an integer linear program (ILP). 
This comes with a downside in complexity, as solving an ILP is NP-complete~\cite[Pg.~245]{Garey1979}.
That said, computing the interleaving distance is known to be NP-hard (see \cite{bjerkevik2017computational,bjerkevik2020computing}, and the discussion in \cref{sec:Problems}), so we have no reason to attempt to design an algorithm that runs in polynomial time. 
Nevertheless, developing efficient ILP solvers is an active area of research, resulting in many solvers that are extremely efficient in practice, so this translation allows us to take advantage of the current best known solvers. 

\subsection{ILP Model Description}
\label{ssec:ILP_model_description}
The goal of this optimization is either to find an $n$-interleaving (\cref{prob:BinaryDecision}) or search over possible $n$-assignments   to minimize the loss function (\cref{prob:LossComputation}).
This ends up being essentially the same setup, depending on whether we use the multiplication by distance matrices $D_\bullet^\bullet$ in \cref{tab:theList}. 
To that end, we describe the ILP here for the loss function version, since determining if there is an interleaving in (\cref{prob:BinaryDecision}) is the same setup but without inclusion of the $D$ matrices. 

In the case of finding the loss, the objective function for the ILP is to minimize $L_B(\varphi, \psi)$ over all possible $n$-assignments $(\phi,\psi)$. 
We denote the objective function as $\ell$ in the ILP.
As the loss is calculated on assignments, the entries in the blocks of the eight assignment matrices 
$M_\eta^A$ for $\eta \in \{ \phi, \phi^n, \psi, \psi^n\}$ and $A \in \{V, E\}$ serve as decision variables.
Using \cref{thm:LossEquivMatrices}, we know that the loss for each individual diagram can be computed by finding the maximum absolute value or the maximum of the ceiling of half the value among the entries of matrix multiplication terms listed in \cref{tab:theList}. 
Thus, we impose constraints in the ILP to connect the matrix representation of each term in $L_B(\varphi, \psi)$. 
We model the loss $\ell$ to be greater than or equal to the relevant entry in these matrices. 
Since it is a minimization problem, the ILP seeks the smallest $\ell$ that satisfies this condition, ensuring the equality.
Below we describe how the three diagram types (see \cref{tab:theList}) contribute to the ILP formulation. 

Reiterating above, our goal is 
\begin{equation*}
    \text{Minimize} \quad \ell 
\end{equation*}
with respect to the variables $ \{ x_{ij}^{\eta, A} \in M_\eta^A \mid  \eta \in \{ \phi, \phi^n, \psi, \psi^n\}, A \in \{V, E\} \}$ restricted to those entries in the blocks of the relevant matrices (we can implicitly assume $ x_{ij}^{\eta, A} = 0$ outside of these blocks). 
To ensure that the resulting matrices have a single one in each column, we include the first collection of constraints,
   \begin{align*}
        \text{Subject to}
        \qquad 
        \sum_i x_{ij}^{\eta,A} = 1 \qquad \forall x_{ij}^{\eta,A} \in M_\eta^A, \, \eta \in \{ \phi, \phi^n, \psi, \psi^n\}, \, A \in \{V,E\},
    \end{align*}
which ensure that all columns sum to 1.

The next set of constraints are associated to  the edge-vertex parallelograms, where the corresponding diagrams are $\Lpl^{S_\tau, S_\sigma}$ and $\Lpr^{S_\tau, S_\sigma}$.  
For example, in the diagram $\Lpl^{S_\tau, S_\sigma}$, the loss is given by 
    $
       \Lpl^{S_\tau, S_\sigma} =  \max \left( D_{G^n}^V \left(M_{\phi}^V \cdot B_F^{\uparrow} - B_{G^n}^{\uparrow} \cdot M_{\phi}^E\right)\right).
       $
Therefore, we set the constraint that the objective function $\ell$ must be larger than any entry in the multiplied matrix. 
Thus, we include the constraints 
\begin{align*}
\text{Subject to} \quad 
 & \ell \geq  \max\left({D_{G^n}^V \left(M_{\phi}^V \cdot B_F^{\uparrow} - B_{G^n}^{\uparrow} \cdot M_{\phi}^E\right)}\right) \\
& \ell \geq \max\left({D_{G^n}^V \left(M_{\phi}^V \cdot B_F^{\downarrow} - B_{G^n}^{\downarrow} \cdot M_{\phi}^E\right)}\right)  \\
& \ell \geq \max\left({D_{F^n}^V \left(M_{\psi}^V \cdot B_G^{\uparrow} - B_{F^n}^{\uparrow} \cdot M_{\psi}^E\right)}\right) \\
& \ell \geq \max\left({ D_{F^n}^V \left(M_{\psi}^V \cdot B_G^{\downarrow} - B_{F^n}^{\downarrow} \cdot M_{\psi}^E\right)}\right).
\end{align*}
    We note that the $D$ and $B$ matrices  are constants in the ILP.
    Therefore, the matrix multiplication expression is linear and suitable for ILP.

For the thickening parallelograms, the corresponding losses are $\Lpl^{S_\rho, S_\rho^n}$ and $\Lpr^{S_\rho, S_\rho^n}$.  
As in the vertex edge case, we need to ensure that all entries in the relevant multiplied matrices from \cref{tab:theList} are bounded by $\ell$. 
Thus we add the following constraints 
\begin{align*}
\text{Subject to} \quad 
& \ell \geq \max\left( D_{G^n}^V \left(M_{\phi^n}^V \cdot I_F^V - I_{G^n}^V \cdot M_{\phi}^V\right)\right) \\
&\ell \geq \max\left( D_{G^n}^E \left(M_{\phi^n}^E \cdot I_F^E - I_{G^n}^E \cdot M_{\phi}^E\right)\right)\\ 
&\ell \geq \max\left( D_{F^n}^V \left(M_{\psi^n}^V \cdot I_G^V - I_{F^n}^V \cdot M_{\psi}^V\right)\right)\\
&\ell \geq \max\left( D_{F^n}^E \left(M_{\psi^n}^E \cdot I_G^E - I_{F^n}^E \cdot M_{\psi}^E\right)\right)
\end{align*}
    Similar to the edge-vertex parallelograms, the $D$ and $I$ matrices are constants in ILP, which makes these constraints linear.

Finally, we include constraints to control the entries on the triangle diagram matrices with corresponding loss functions  $\Ltd^{S_\rho}$ and $\Ltu^{S_\rho}$.   
We first  discuss the diagram $\Ltd^{S_{\sigma_i}}$ which operates on vertices. 
The loss contribution from this type of diagram can be obtained by finding 
    $
    \max_{i} \Ltd^{S_{\sigma_{i}}} = \max \left\{ \left\lceil\tfrac{x}{2}\right\rceil  \, \middle|\,  x \in D_{F^{2n}}^V \left(I_{F^n}^V \cdot I_F^V - M_{\psi^n}^V\cdot M_{\phi}^V \right) \right\}.
    $
Unlike the previous two types of diagrams, the expression $D_{F^{2n}}^V \left(I_{F^n}^V \cdot I_F^V - M_{\psi^n}^V\cdot M_{\phi}^V \right)$ is quadratic rather than linear, as it includes the term $M_{\psi^n}^V\cdot M_{\phi}^V$, which is a multiplication of two variable matrices. 
However, we can linearize the problem by introducing a collection of integer and binary reparametrization variables and adding them to the constraints.
    Furthermore, the ceiling function that we need to compute here is not a linear function either.
    This is taken care of with another reparametrization integer variable.
Specifically, we create the variables $\{z_{ijk}  \}$, one to represent each entry $x_{ij}^{\psi_n,V} x_{jk}^{\phi,V}$, so that if we construct the matrix $M$ where $M_{ik} = \sum_j z_{ijk} $, then 
${M = (M_{\psi^n}^V)_{ij}(M_{\phi^n}^V)_{jk}}$. 
Then, we get the following constraints for $\Ltd^{S_{\sigma_{i}}}$:
    \begin{align*}
        \text{Subject to} 
        \begin{alignedat}[t]{3}
        \quad & 2c_{ij} \geq k_{ij} \quad  
            &\forall \quad k_{ij} \in D_{F^{2n}}^V \left(I_{F^n}^V \cdot I_F^V - M \right)\\
        \quad & z_{ijk} \leq x_{ij}^{\psi^n,V} 
            &\forall \quad x_{ij}^{\psi^n,V} \in M_{\psi^n}^V\\
        \quad & z_{ijk} \leq x_{jk}^{\phi,V} \quad 
            &\forall \quad x_{jk}^{\phi,V} \in M_{\phi}^V\\
        \quad & z_{ijk} \geq x_{ij}^{\psi^n,V} + x_{jk}^{\phi,V} - 1\\
        \quad & \ell \geq c_{ij}.
        \end{alignedat}
    \end{align*}
    These constraints are only for one of the four relevant triangle diagrams, however the other setups are similar. 

Our full linear program is formulated to minimize $\ell$ subject to the constraints described above. Specifically, let
\[
V_{\max} = \max\{|V_H| \mid H \in F, F^n, G, G^n\}, \quad 
E_{\max} = \max\{|E_H| \mid H \in F, F^n, G, G^n\}.
\]
The total number of variables and constraints in the ILP can be bounded in terms of these graph sizes.

The primary binary decision variables are the entries of the eight assignment matrices $M_\eta^A$, where $A \in \{\varphi, \varphi^n, \psi, \psi^n\}$ and $A \in \{V, E\}$. Each matrix has a block structure, with one block for each function value $i \in [-L, L]$. Each block has size at most $V_{\max}^2$ (respectively $E_{\max}^2$) for vertex-type (respectively edge-type) matrices, yielding a total of $O\left(L \cdot (V_{\max}^2 + E_{\max}^2)\right)$ primary binary variables.

The remaining variables are auxiliary reparameterization variables $z_{ijk}$, introduced to linearize products such as $M_{\psi^n}^V M_\varphi^V$ appearing in the triangle-diagram constraints, along with integer variables $c_{ij}$ to handle ceiling operations. Each triangle product at level $i$ introduces one binary variable $z_{ijk}$ for each triple $(i,j,k)$, contributing up to $V_{\max}^3$ variables per function value. Across all four triangle diagrams, this yields $O\left(L \cdot (V_{\max}^3 + E_{\max}^3)\right)$ auxiliary variables.
Consequently, the total number of variables and constraints in the ILP is $O\left(L \cdot (V_{\max}^3 + E_{\max}^3)\right)$, which is polynomial in the input size. We note that the cubic dependence on $V_{\max}$ and $E_{\max}$ may be reducible with a more refined parameterization, which we leave for future work.

\subsection{Python Implementation Details}
\label{ssec:PythonImplementationILP}
    We implement the ILP using PuLP library in Python which is an open-source software package in Python specializing in modeling and solving mixed-integer linear programming problems \cite{mitchell2011pulp}. 
    Specifically, PuLP is equipped to solve ILP problems involving integer and binary variables, as in our case. 
    PuLP also integrates seamlessly with popular ILP solvers, both commercial (e.g. Gurobi \cite{gurobi}) and open source (e.g. CBC \cite{cbc}, GLPK \cite{glpk}), without having to alter the model formulation. 
    It is important to note that PuLP is exclusively designed to model \textit{linear} optimization problems, hence the extra work done in \cref{ssec:ILP_model_description} to linearize our ILP.    
    We choose PuLP over available non-linear modeling frameworks like Pyomo~\cite{hart2009python} because it is lightweight and has a simpler syntax, making it easier to interpret and update.
    Further, the design of PuLP is streamlined for ILP computation, reducing the computational overhead.
    We note that given an ILP, PuLP will return one of five options\footnote{This is discussed in the PuLP documentation at \href{https://coin-or.github.io/pulp/technical/constants.html}{coin-or.github.io/pulp/technical/constants.html}}: 
\textbf{(1) or (-1):} it terminates with the optimal solution (feasible or infeasible); 
\textbf{(0):} it cannot solve the problem given the time and memory constraints; 
\textbf{(-2):} the problem is unbounded; or 
\textbf{(-3):} the problem is undefined.
However, in our experiments, the solver always terminates successfully (output \textbf{(1)} or \textbf{(-1)}). 

    \begin{figure}
        \centering
        \includegraphics[width=0.7\linewidth]{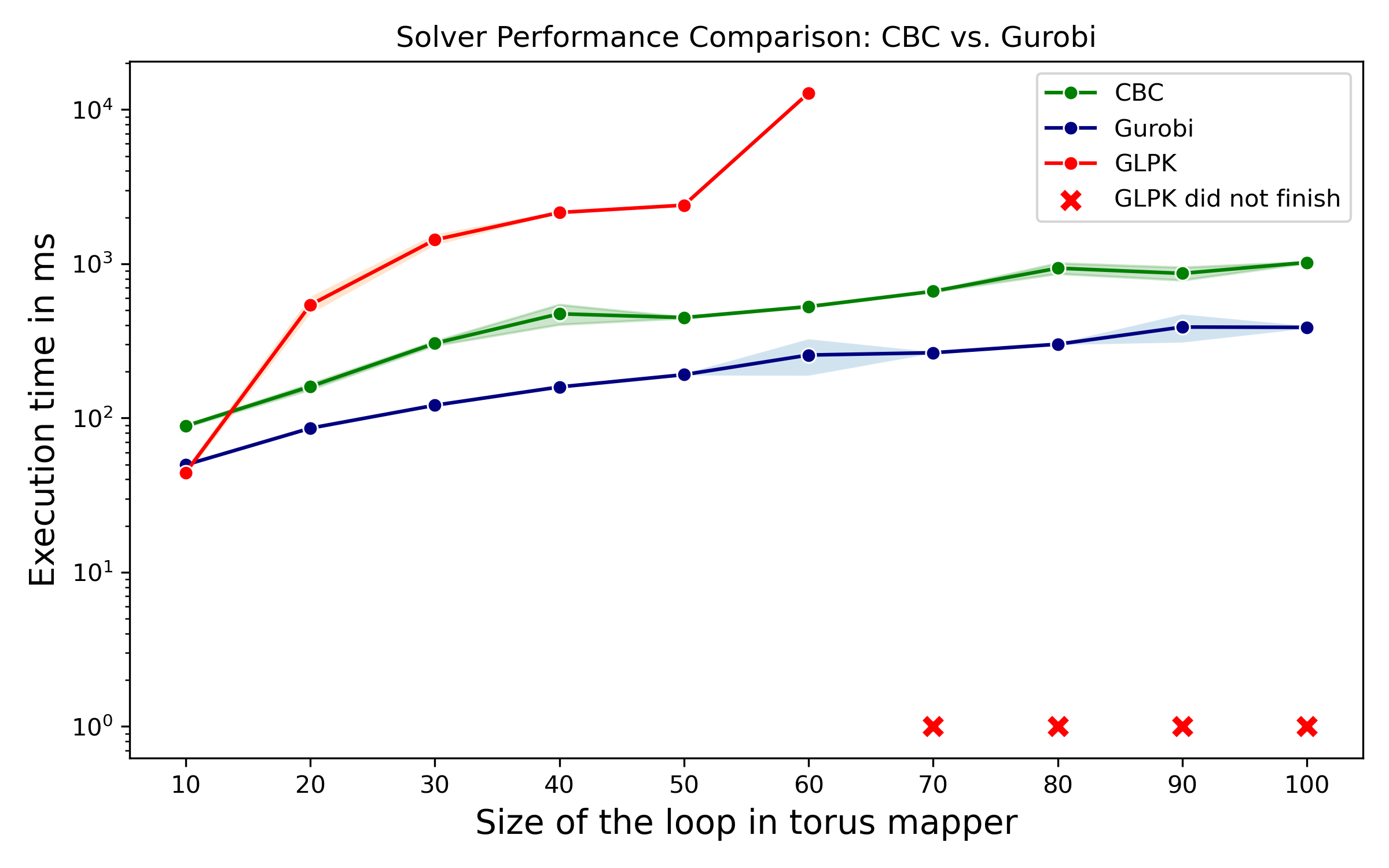}
        \caption{Time taken by different solvers to optimize the loss for the torus and line mappers.}
        \label{fig:solver_timing}
    \end{figure}
    
    We test our ILP implementation with simple input examples to make a choice of solver for use in PuLP; 
    the specifics of this particular example pair will be given in \cref{ssec:line_torus_sec}.
    In this example, increasing loop size increases the size of the input graphs, thus the size of the ILP increases.
   We test two open-source solvers, CBC (COIN-OR Branch and Cut) and GLPK (GNU Linear Programming Kit), as well as the commercial solver Gurobi. 
    The results are shown in \cref{fig:solver_timing}.
    We see that for smaller problem size (up to a loop size of 20 in the torus mapper), all of the solvers take less than one second to optimize the loss.
    However, as the size of the loop increases, the GLPK solver becomes significantly slower than CBC or Gurobi. 
    In fact, for loop sizes of 70 and above, GLPK fails to complete computation even after 15 minutes of runtime.
    As expected, the commercial solver Gurobi is highly optimized, resulting in superior performance. 
    Both CBC and Gurobi's computation times increased approximately linearly with the loop size.    
    Among the open-source solvers we tested, CBC is slower than Gurobi but is still able to optimize large problems effectively.
    Since CBC is the default solver in PuLP, it is also more accessible from a user’s perspective.
    Thus, for the rest of the experimental section, we exclusively use the CBC solver inside of PuLP.
    The code for our implementation is available in the Python package \texttt{ceREEBerus} \cite{cereeberus}.

\section{Experiments}
\label{sec:experiments}
We run all experiments on a high-performance computing cluster with dual AMD EPYC 9654 processors, 192 CPU cores, and 768GB RAM, using 10–32 cores per run.
\subsection{Experiments with Line and Torus Mappers}
\label{ssec:line_torus_sec}

\begin{figure}
    \centering
    \includegraphics[width=0.45\linewidth, align = c]{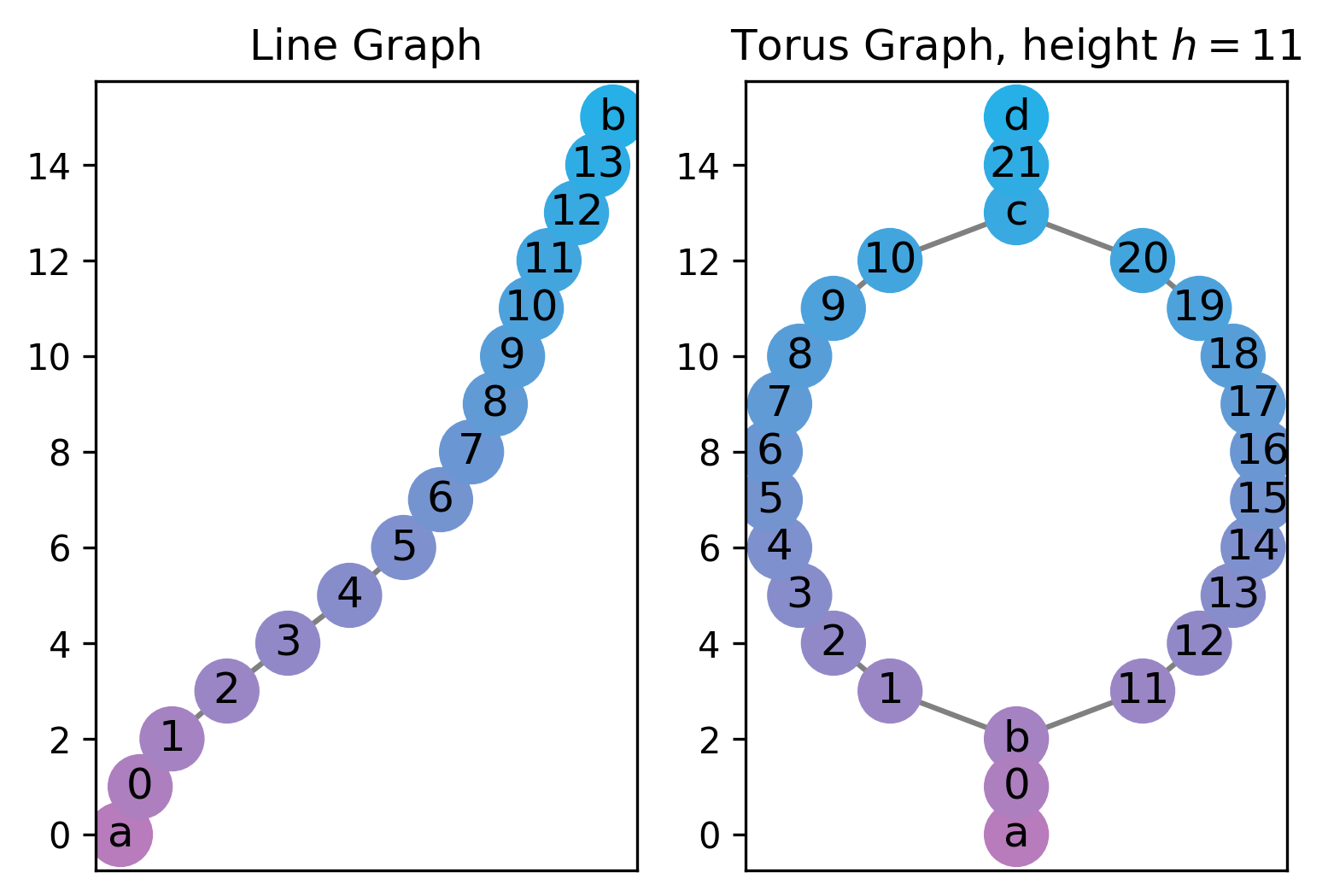}
    \includegraphics[width=0.54\linewidth, align= c]{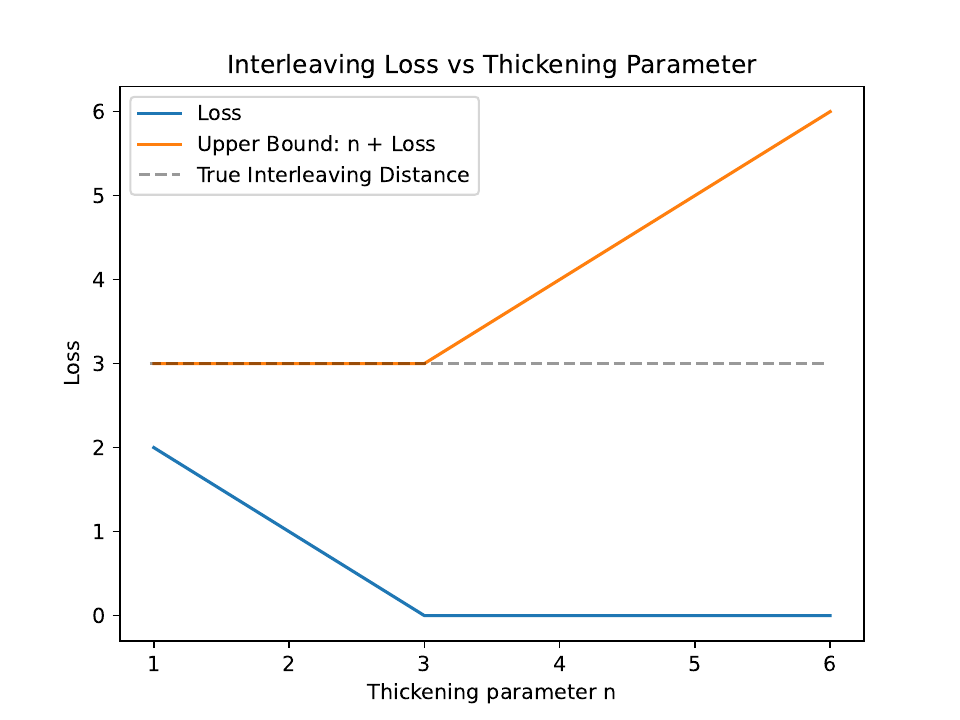}
    \caption{(Left) Examples of line and torus mappers, the latter featuring a loop of height 11. (Right) The loss function and resulting upper bound computed for this pair of graphs over varying values of $n$. In both figures, letters denote critical vertices and numbers denote regular vertices.}
    \label{fig:loss_vs_n}
\end{figure}

We validate our loss-function optimization by testing it on two simple mapper graphs whose interleaving distances can be computed by hand. 
The first is the \emph{line mapper graph} $L$, consisting of a single vertex at each integer function value and edges connecting sequential vertices. 
The second is the \emph{torus mapper} $T_h$, which is a mapper graph of a height function on an upright torus and contains a loop of height $h$.
Both mappers are defined over the same range of function values; see the left panel of \cref{fig:loss_vs_n} for an example with $h = 11$. 
It is straightforward to verify that 
$d_I(L, T_h) = \lceil h/4 \rceil$; for $h = 11$, this gives $d_I(L, T_{11}) = 3$. 

\subsubsection{Loss Optimization vs Theoretical Computations}

We compare the results of the ILP with the hand-computed interleaving distance on the line and torus mappers. In the right panel of \cref{fig:loss_vs_n}, we see that the loss computed for varying choices of $n$ achieves the true interleaving distance for all $n$ up to and including the actual interleaving value. 
Thus, in this particular example, a search over $n$ would not be required. However, this behavior cannot be assumed \textit{a priori}; see \cref{ex:binarySearchExample} for a case where the search is indeed necessary.
Note that we obtain the interleaving maps as a byproduct of this computation, see \cref{fig:optimizedmap} for an example of the output optimal interleavings.
We compute the distances across varying loop height ($h$).
For each mapper pair, we perform 50 iterations of loss optimization and in every case the ILP returns the optimal solution with the same optimized loss value, empirically demonstrating the stability of our optimization pipeline. 

\begin{figure}
\centering
        \includegraphics[width=.8\linewidth, align = c]{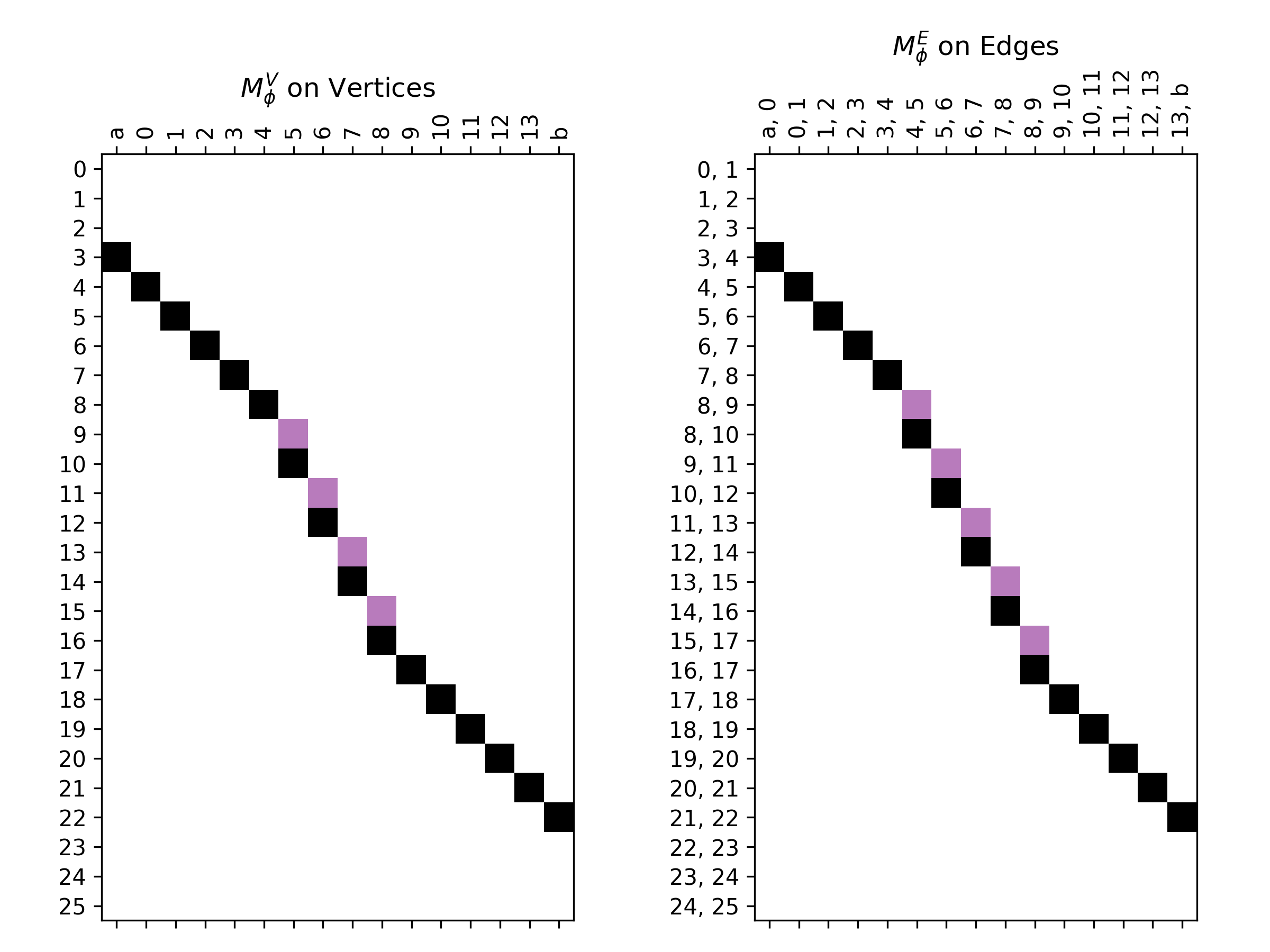}
        \includegraphics[width = .07\textwidth, align = c]{colorbar-PurBl-binary.png}
\caption{The optimized map from the line mapper to the 1-smoothed torus from the example in \cref{fig:loss_vs_n} (left)).}
\label{fig:optimizedmap}
\end{figure}

\subsubsection{Comparing Loss Optimization with \cref{prob:BinaryDecision} and \cref{prob:LossComputation}}
\label{ssec: p2_p3_benchmark}
By \cref{cor:p2_p3_equiv}, the optimization problems \cref{prob:BinaryDecision} and 
\cref{prob:LossComputation} are equivalent and either may be used 
to compute an upper bound on the interleaving distance.
We compare the runtime of \cref{prob:BinaryDecision} and \cref{prob:LossComputation} on the same torus and line mapper pairs described above, varying the loop height $(h)$. 
For each $h$, we measure both the total runtime, which includes both pre-computing the necessary matrices and the ILP optimization, and the ILP optimization time alone, averaged over 10 repetitions.

\begin{figure}
    \centering
    \includegraphics[width=0.7\linewidth]{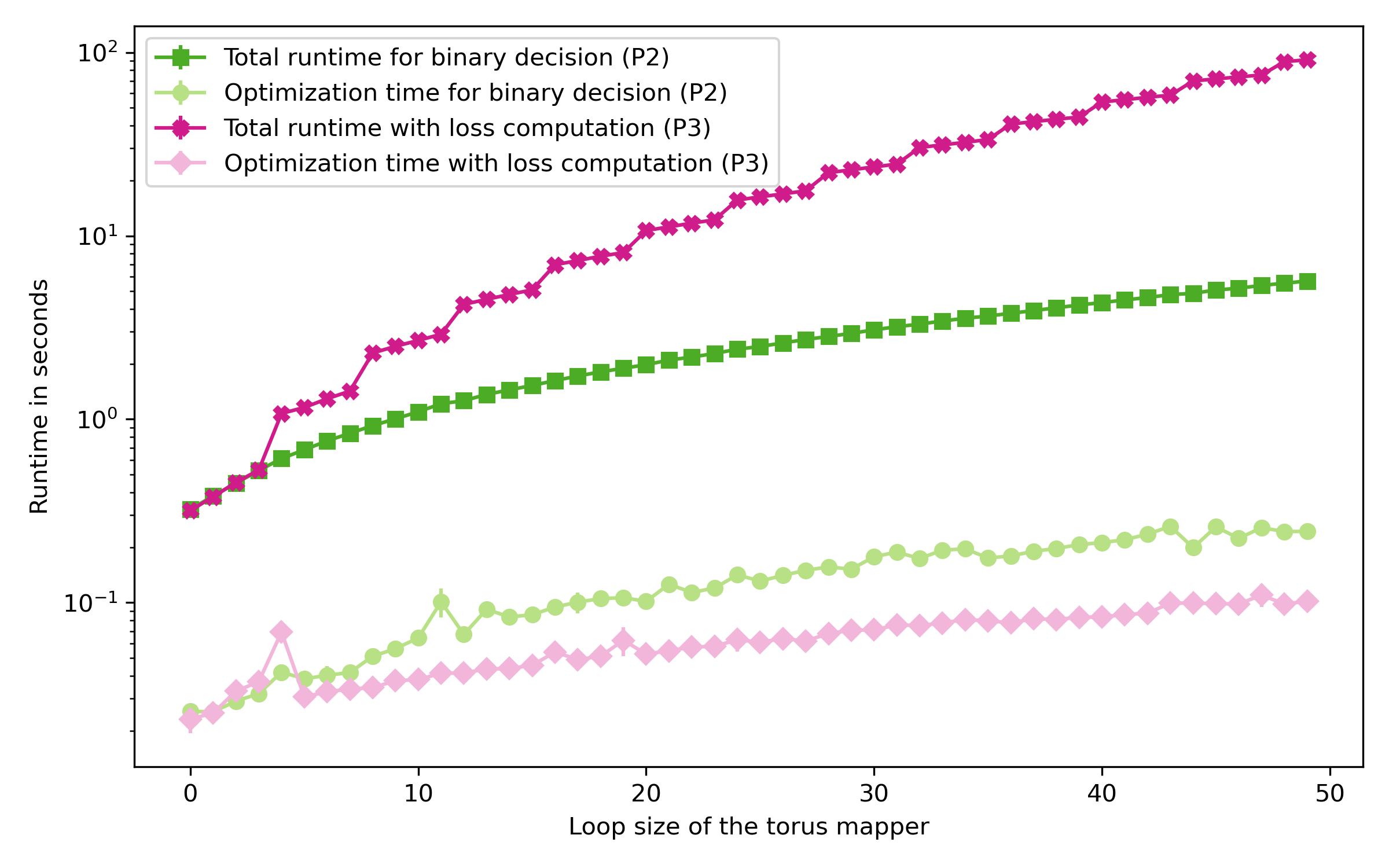}
    \caption{Total and ILP optimization runtime (log scale) for \cref{prob:BinaryDecision} (binary decision, green) and \cref{prob:LossComputation} (loss computation, pink), varying the loop height $h$ of the torus mapper, averaged over 10 repetitions. The additional overhead in \cref{prob:LossComputation} is due to the pre-computation of the distance matrix D.}
    \label{fig:p2_p3_benchmark}
\end{figure}

As shown in \cref{fig:p2_p3_benchmark}, \cref{prob:BinaryDecision} is consistently faster than \cref{prob:LossComputation} in total runtime, except for very small mappers, where there's no difference in runtime. 
However, as the mappers grow in size, the gap in runtime widens. 
On the other hand, when we compare ILP optimization time alone, \cref{prob:LossComputation} is marginally faster than \cref{prob:BinaryDecision}.
This suggests that the overhead in \cref{prob:LossComputation} mostly comes from the pre-processing cost of computing the distance matrix $D$, which requires performing shortest-path computations on the thickened graph for each function value block at every iteration.

We attribute the slightly faster ILP solve time in \cref{prob:LossComputation} to the difference in the problem formulation.
\cref{prob:BinaryDecision} is a pure feasibility problem with hard equality constraints, which demands the solver to certify infeasibility when no solution exists.
\cref{prob:LossComputation}, in contrast, tries to minimize the loss function, which always has a finite solution.
In all experiments, both \cref{prob:BinaryDecision} and 
\cref{prob:LossComputation} return the same interleaving distance 
bound, consistent with \cref{cor:p2_p3_equiv}.
These results suggest that while the two formulations are theoretically equivalent, \cref{prob:BinaryDecision} is currently preferable in practice when computational efficiency is the priority.

\subsection{Classification Experiments}
\label{ssec:ClassificationBrief}

We ran two proof-of-concept classification experiments using this optimized loss on the mapper interleaving distance. 
The details are provided as follows.
\subsubsection{Classification of Binary Images}
\label{ssec:ClassificationMPEG7}

We conducted a classification experiment on the MPEG-7 dataset \cite{jeannin1999}, specifically MPEG-7 CE
Shape-1 Part B, which is a widely used benchmark for shape analysis and classification with binary
images.
The results are shown in \cref{fig:image_mds}.
Our methodology involves computing mapper graphs from the dataset and analyzing their interleaving distance for classification tasks.

In this experiment, we choose six image classes, \emph{apples, cups, forks, hearts, horseshoes} and \emph{sea snakes}, and select 10 images from each class.
See \cref{fig:image_mds}  (left) for examples in each category.
We convert these images to mapper graphs using {\it KeplerMapper} \cite{van2019kepler}, where the lens function is given by the $y$-coordinate of each pixel. For every point cloud, we use a cover of 10 elements with a $30\%$ overlap, and set the epsilon value for DBSCAN to 3. These parameters are chosen to generate a mapper graph which best represents the input image.
For each resulting mapper graph, we choose parameter values to ensure that there is only one connected component in the graph.
For each node, we assign the function value, or \emph{height}, of that node to be the midpoint of the cover element it belongs to, and normalize so that all node heights are in the range $[0, 20]$. 
This normalization ensures consistency in mapper size, simplifying loss computation.

Once the mapper graphs are obtained, we compute the optimal upper bound of the interleaving distance between them using the optimization strategy discussed earlier, which serves as a measure of similarity between mapper graphs. 
To visualize the distance performance, we construct a {\it multi-dimensional scaling} (MDS) \cite{borg2007modern} plot (\cref{fig:image_mds}, right) using the computed distance matrix (MDS stress is 0.14, which ensures good goodness-of-fit). 
MDS is chosen for visualization as it accepts an arbitrary distance matrix as input. 

As we see in \cref{fig:image_mds} (right), the optimal upper bounds on distance successfully cluster similar shapes together. 
This behavior reflects what the interleaving distance is designed to capture: the distance is small only when mapper graphs agree in both graph structure \emph{and} lens function. 
Among all categories, {\it apple} exhibits the most well-defined grouping. 
Additionally, some of the other shapes are well separated, such as {\it apple} and {\it fork} or {\it seasnake}.
However, certain shapes are not differentiated by this distance. 
For instance, the {\it cup} category in the MPEG-7 dataset includes two types of cups: one with joined handles, forming a loop, and another without joined handles. 
This structural difference is reflected in the mapper representation, causing a few to appear far apart despite belonging to the same category.  
This is to be expected since the mapper interleaving distance is built to find similar topological structure in the resulting graph, and does not take the geometric similarity into account. 
Likewise, the horseshoes and sea snakes, despite being similar structurally, have a large distance between them since the mapper graph lens function takes the embedding direction into account; the horseshoe with opening of the U-shape pointing down and the sea-snake with the opening point up are considered very different under this distance. 

We also use the pairwise distances to classify the images into their correct classes using $k$-nearest neighbors ($k$-NN) \cite{cover1967nearest}, based on the optimized loss computed on the mappers.
To determine the optimal number of neighbors $(k)$ and avoid overfitting or underfitting, we first perform a $5$-fold cross-validation on the data, testing values of $k$ from $1$ to $30$.
The optimal $k$ was selected to be $1$, which resulted in the highest and most stable accuracy of $91.67\%$, as shown in \cref{fig:image_knn} (left).
\begin{figure}
    \centering
    \includegraphics[width=0.46\textwidth]{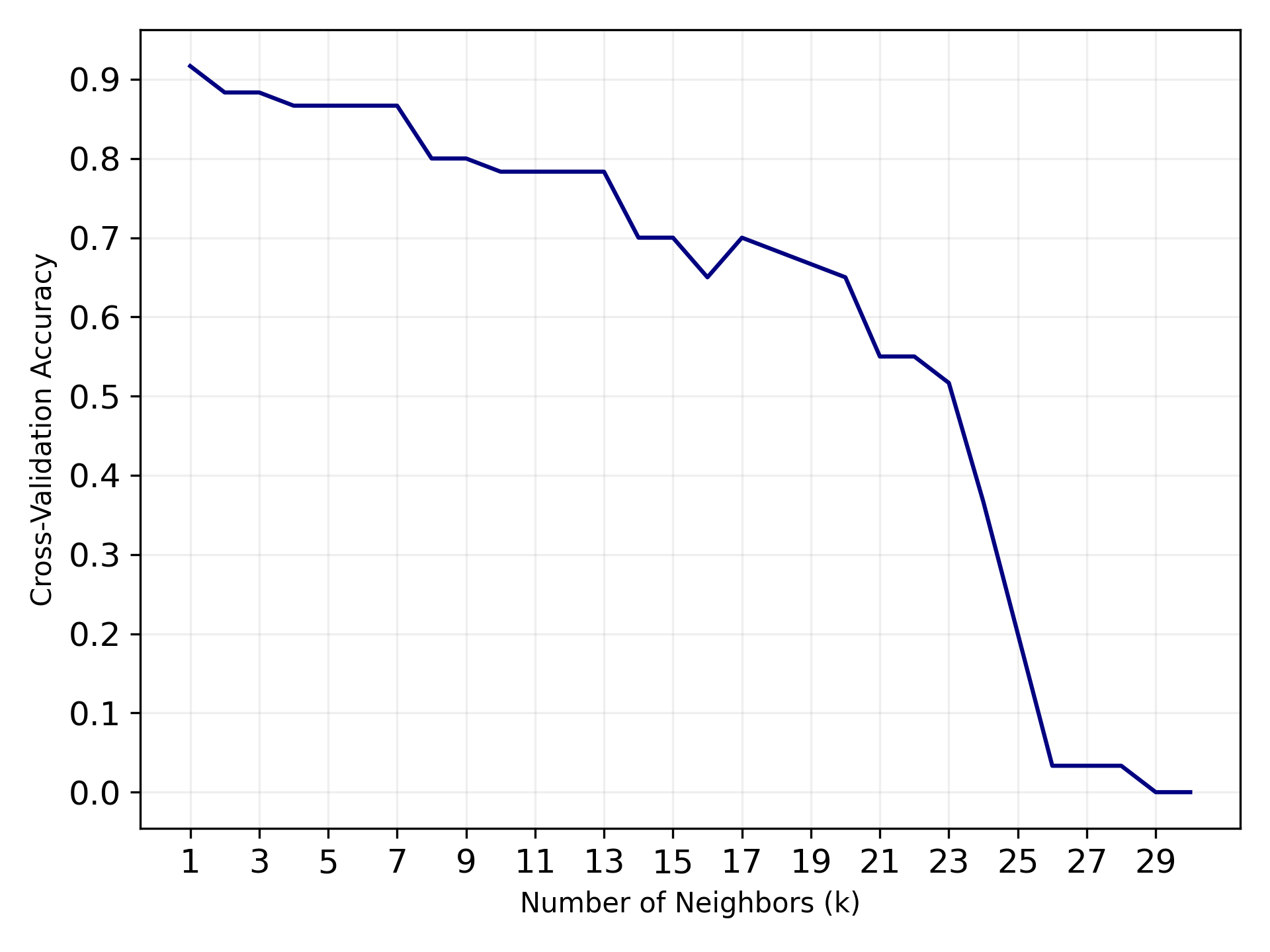}
    \includegraphics[width=0.45\textwidth]{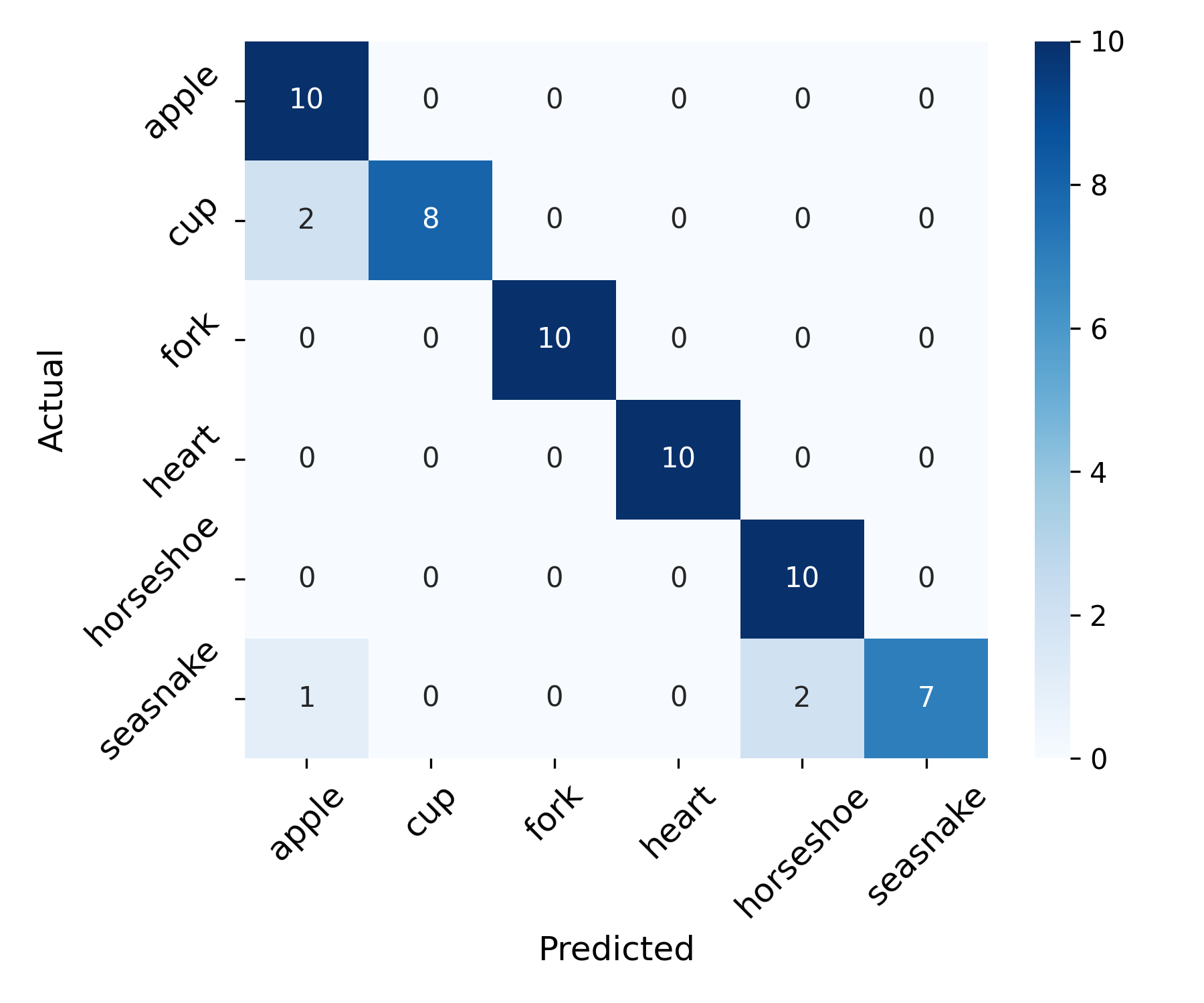}
    \caption{(Left) Accuracy of KNN classifier for different $k$ values during 5-fold cross-validation. The optimal $k = 1$ achieved $91.67\%$ accuracy.\\
    (Right) Confusion matrix showing how the actual classes match with the predicted class. Each $(i,j)$-th cell indicate number of times an image from category $i$ was classified as category $j$.}
    \label{fig:image_knn}
\end{figure}
For a more robust evaluation, we also implement {\it leave-one-out cross-validation} (LOO-CV), where one data point is used as a test case while the rest are used for training the model.
To visualize the classification and misclassification, we display a {\it confusion matrix}, which highlights the categories that are being misclassified, see \cref{fig:image_knn} (right).
The LOO-CV results confirm the model's accuracy of $91.67\%$, indicating that while the model is generally effective, there are still some misclassifications.
It is important to note that this is a preliminary experiment with only 10 images per category (totaling 60 images), and expanding to the full MPEG-7 dataset might improve accuracy.
As we see in the confusion matrix in \cref{fig:image_knn} (right), the categories that suffer from misclassification are {\it cup} and {\it seasnake}. 
As discussed before, the some of the cups have joined handles and some do not, which contributes to the misclassification.
Additionally, the sea snake images are highly varied in how they are oriented, leading to misclassification due to differences in their embeddings, which results in generation of very different mapper graphs.

These results suggest that the computed distance bound indeed captures meaningful topological differences in the data, indicating its potential usefulness in further real-world applications.

\begin{figure}
    \centering
    \includegraphics[width=0.42\linewidth, align = c]{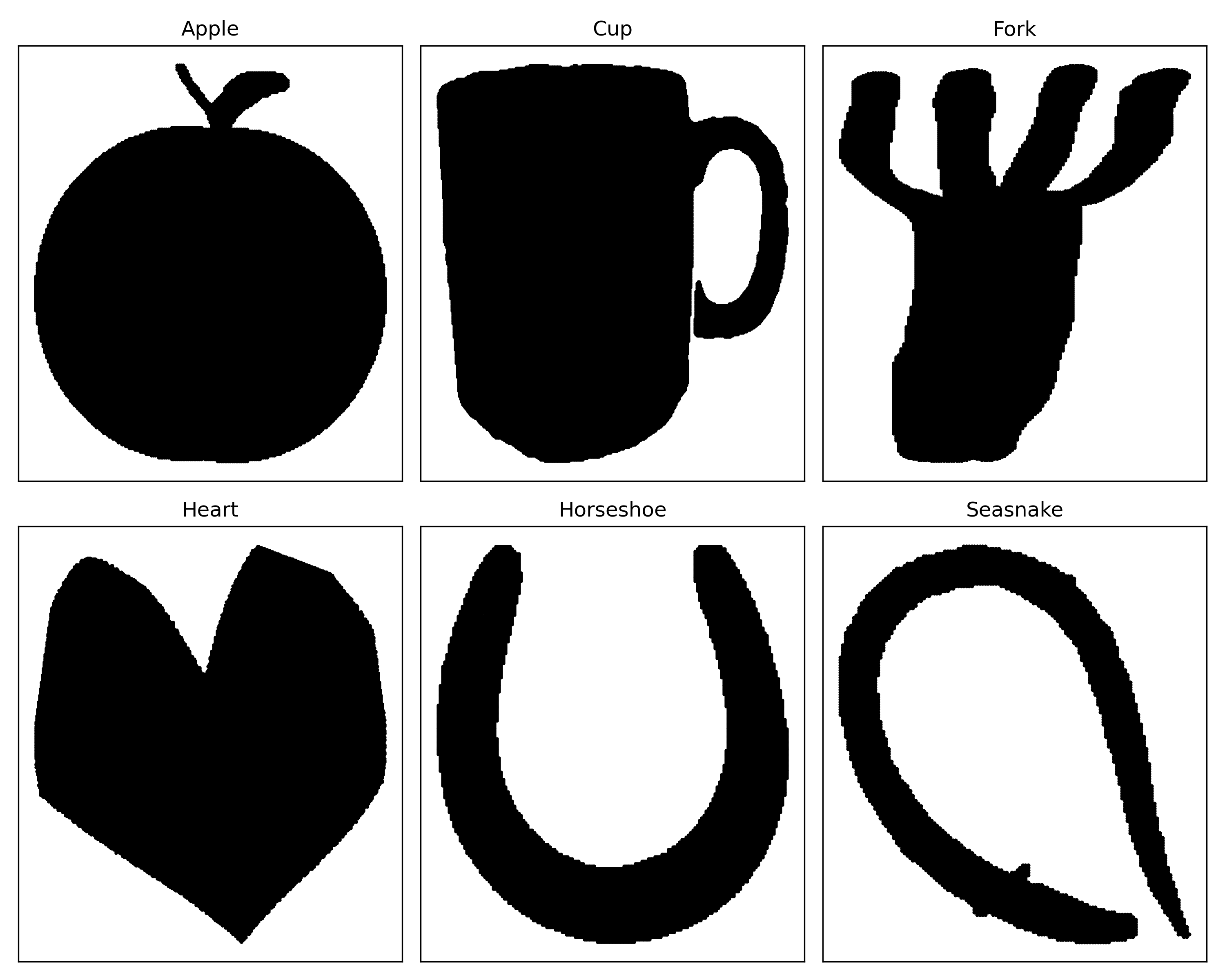}
    \includegraphics[width=0.52\linewidth, align = c]{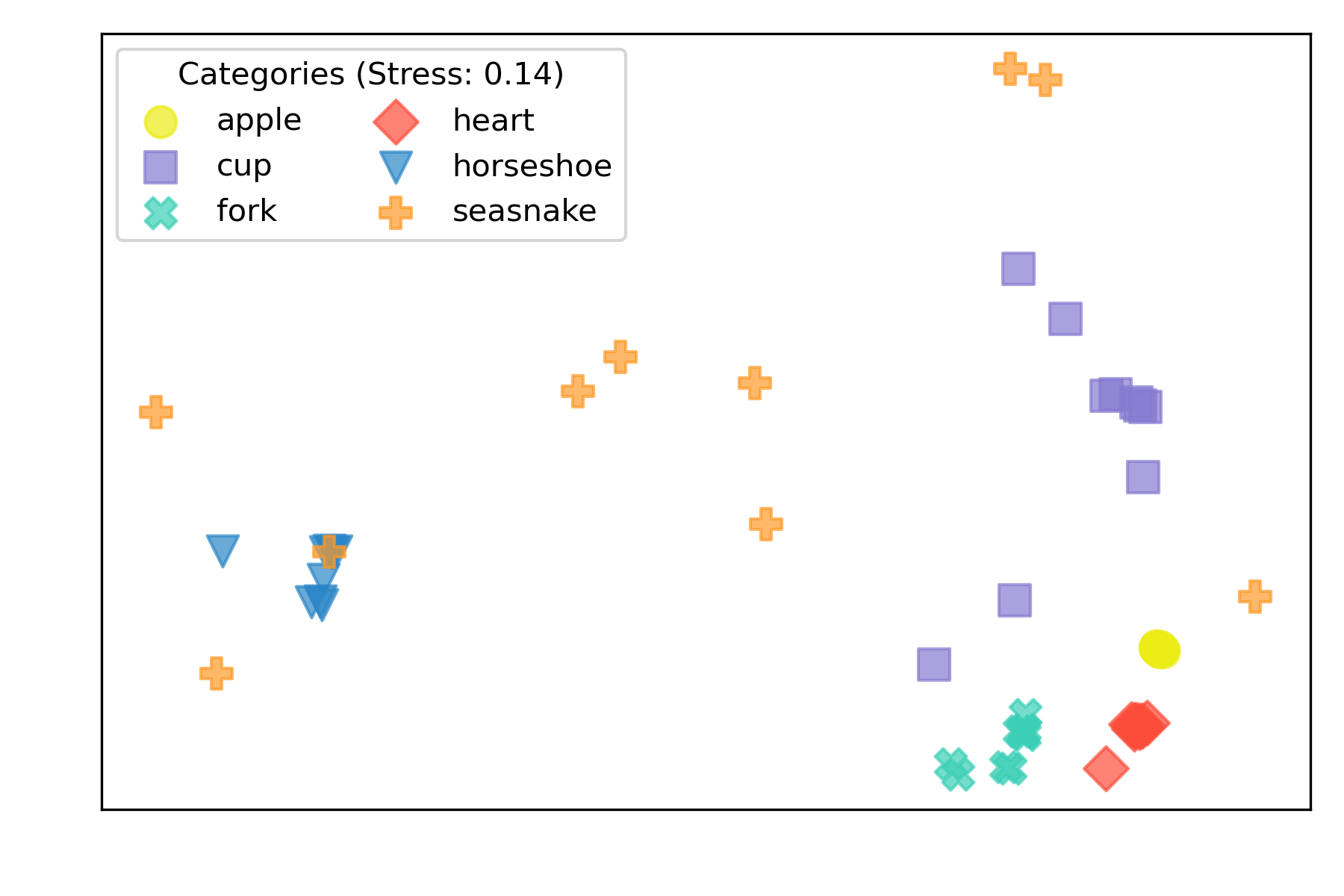}
    \caption{(Left) Example images for six categories of images from the MPEG-7 dataset.
    (Right) MDS plot computed on the pairwise optimal upper bound for the interleaving distance.}
    \label{fig:image_mds}
\end{figure}

\subsubsection{Classification of Alphabet Letters}
\label{ssec:ClassificationLetters}

Alongside the binary images, we also conduct a classification experiment on synthetically generated point clouds of six English letters: \textit{A, B, D, I, M, R}.

For each letter, we generate nine 2D point clouds by varying the number of points (500, 700, 1000) and adding Gaussian noise with standard deviation (0, 0.01, 0.02). 
The point clouds are created using geometric representations of each letter.
Different sampling densities and added noise are meant to mimic real-world variability.
\cref{fig:mds_letters} (left) demonstrates how varying the number of points and noise affects the letter \textit{A}.
We follow the same classification pipeline as used for the binary images in \cref{ssec:ClassificationMPEG7}.
Each point cloud is converted into a mapper graph using \textit{KeplerMapper} \cite{van2019kepler}, with the $y$-coordinate of each point as the lens function.
We choose parameter values so that each mapper has a single connected component, and for each node we assign the function value, or \emph{height}, to be the midpoint of the cover element it belongs to, normalized so that all node heights lie in the range $[0, 20]$.
For every mapper, we use a cover of 10 elements with $30\%$ overlap, with DBSCAN (eps=3) as the clustering algorithm.
These parameters were found to best capture the structure of the data.

\begin{figure}[h]
    \centering
    \includegraphics[width=0.45\linewidth, align = c]{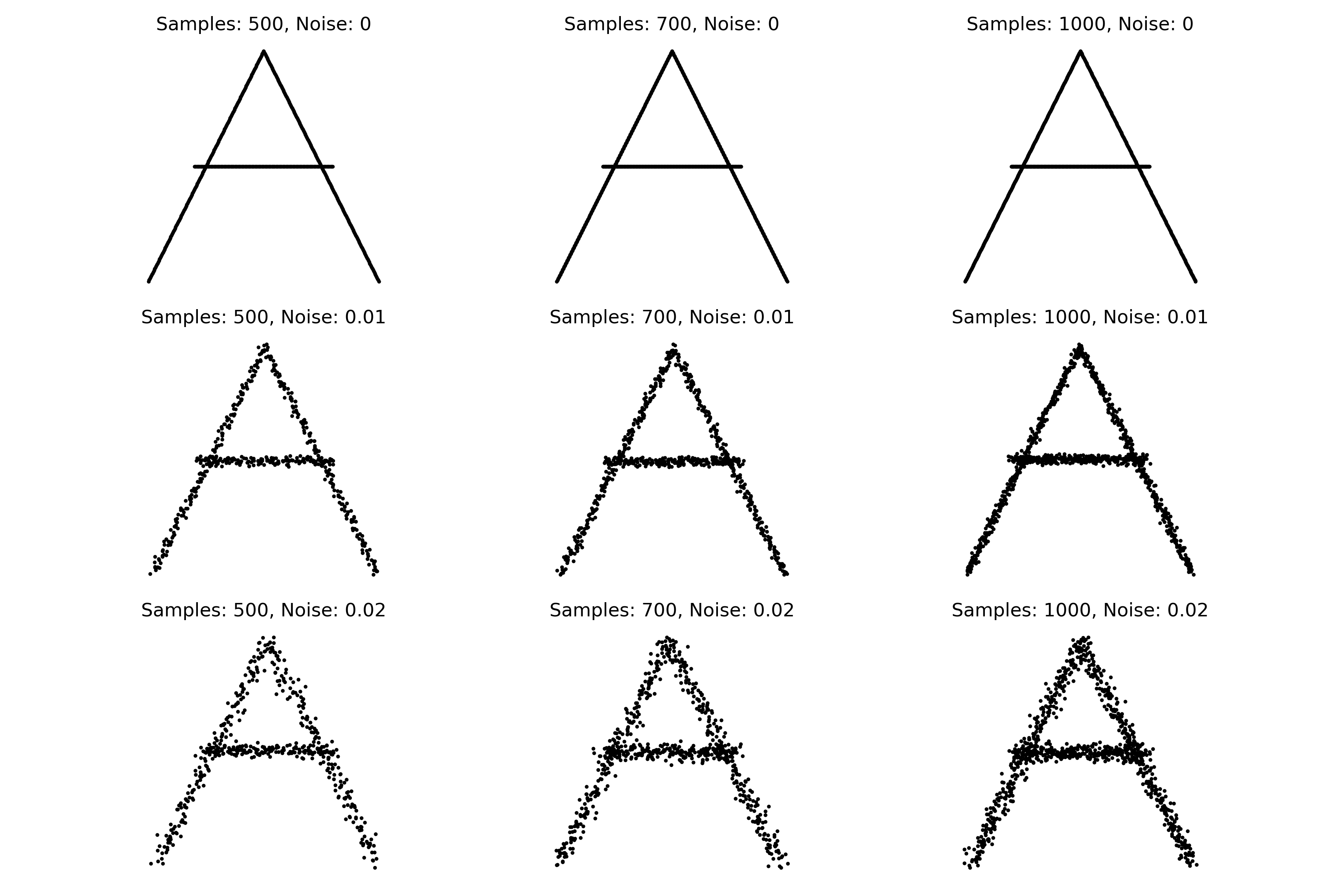}
    \includegraphics[width=0.5\linewidth, align= c]{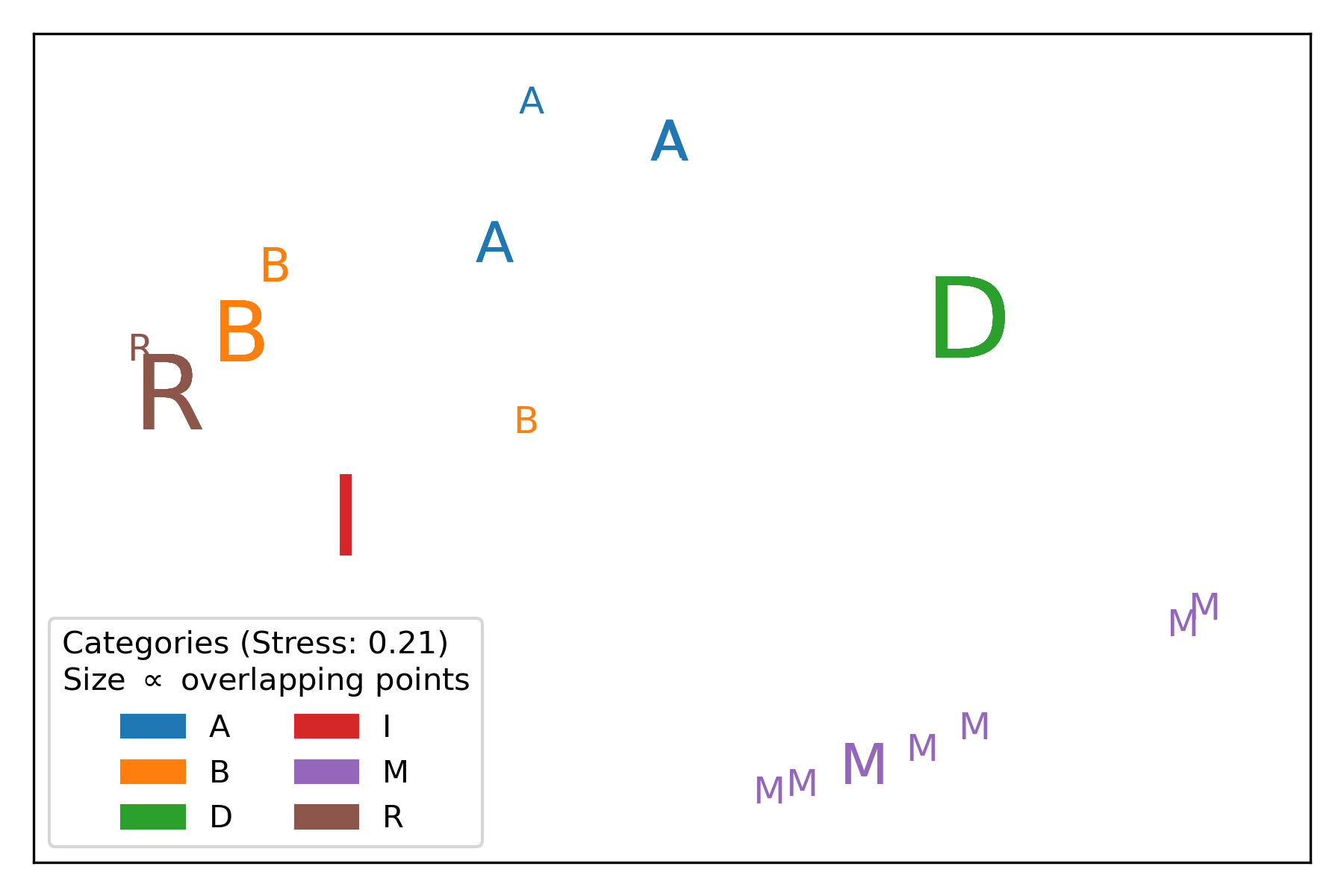}
    \caption{(Left) Point clouds generated for the letter \textit{A} with varying sample density and noise.
    (Right) MDS plot computed on the pairwise optimal upper bound for the interleaving distance.}
    \label{fig:mds_letters}
\end{figure}

We then compute the pairwise optimal upper bound of the interleaving distance and use this as a similarity metric for classification.
To visualize the clustering of different letters, we construct an MDS plot, shown in \cref{fig:image_mds} (right).
The results demonstrate that letters with distinctive shapes, such as \textit{A} and \textit{W}, form clear clusters.
However, some letters such as \textit{R} struggle to cluster distinctly, and distances between certain letters, such as between samples of \textit{R} and \textit{B}, can be small.
This is expected, given the similarity in their mapper graph structures, especially when point density is low and noise is high.

We run a classification task using the pairwise similarity metric to classify letters with a $k$-NN classifier. 
\begin{figure}[h]
    \centering
    \includegraphics[width=0.5\linewidth]{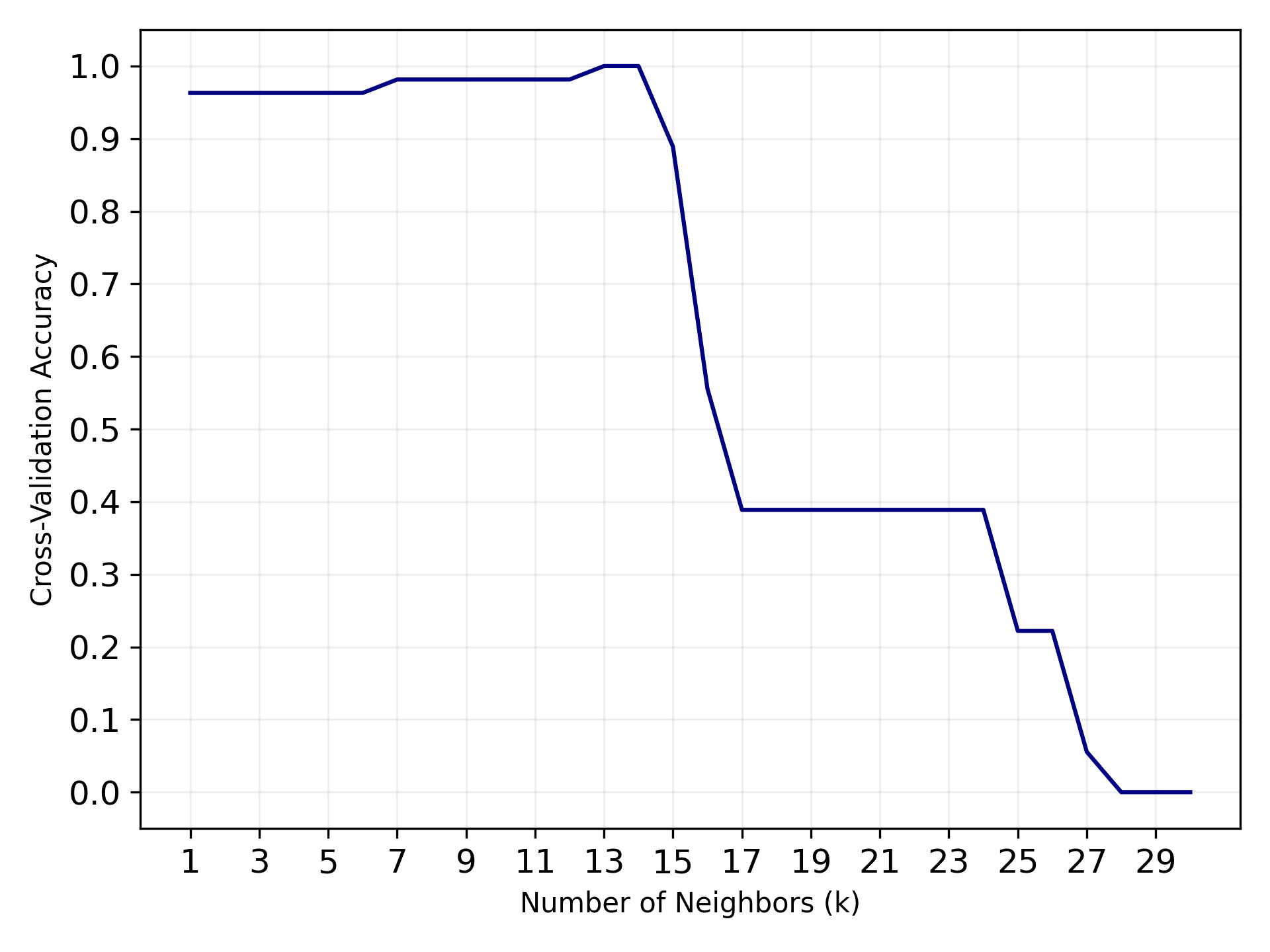}
    \includegraphics[width=0.45\textwidth]{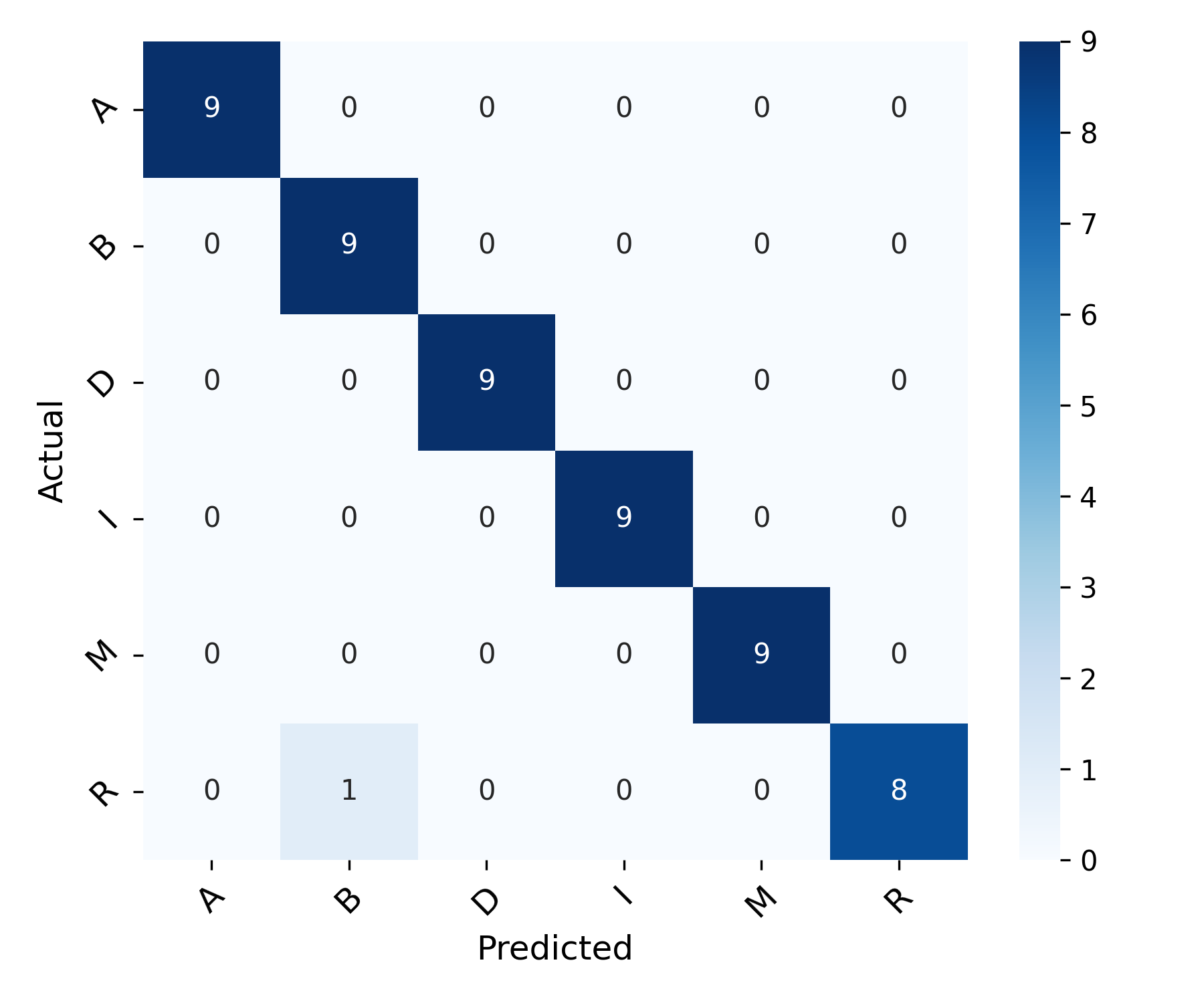}
    \caption{(Left) Accuracy of KNN classifier for different $k$ values during 5-fold cross-validation. The optimal $k = 9$ achieved $98.15\%$ accuracy.\\
    (Right) Confusion matrix showing how the actual classes match with the predicted class. Each $(i,j)$-th cell indicate number of times an image from category $i$ was classified as category $j$.}
    \label{fig:knn_accuracy_letter}
\end{figure}
The optimal value of $k$ is 9 (see \cref{fig:knn_accuracy_letter}), which yields $98.15\%$ accuracy.
We also performed leave-one-out cross-validation (LOO-CV), confirming the accuracy remains at $98.15\%$.
This shows the model is highly effective, though some misclassifications occur.
Like the MPEG-7 dataset, this letter dataset is quite small, with only nine point clouds in each of the six classes.
Expanding the experiment on a more robust dataset could improve the accuracy further.

These results further support the idea that the approximate interleaving distance captures meaningful topological variation in the data.

\section{Discussion}
\label{sec:discussion}

In this paper, we present the \emph{first} algorithm for computing a bound on the interleaving distance for mapper graphs and, to our knowledge, the \emph{first} implementation for exploring interleavings on mapper graphs. 
We do this by providing a detailed matrix formulation of interleavings, and including all code in the open source Python package \texttt{cereeberus} \cite{cereeberus}.
We deliberately avoid claiming that our method “computes’’ the interleaving distance, as our only theoretical guarantee is that the algorithm returns an upper bound. 
Nevertheless, we demonstrate that in several small examples where the true interleaving distance is known, our computed bound matches the exact value. 
We further provide a proof of concept showing that our implementation can be used to compare input images from the MPEG-7 dataset and noisy samples of letters.

This work represents an initial step toward the practical use of the interleaving distance in applications. 
There are many opportunities for optimization and speed improvements in our implementation. 
The most obvious target is the computation of the distance matrix $D$ (\cref{ssec:DistanceMatrices}). 
We observe in \cref{ssec: p2_p3_benchmark} that the loss optimization problem \cref{prob:LossComputation} takes much longer to compute and the difference in time with binary decision problem \cref{prob:BinaryDecision} only increases as the mappers get bigger.
However, the actual ILP optimization time for \cref{prob:LossComputation} is faster than \cref{prob:BinaryDecision}, giving us hope that by improving the computation of $D$, we can use the \cref{prob:LossComputation} for experiments in the future. 
Our current implementation performs a breadth-first search with a union–find data structure to compute distances independently at each level; we suspect that variants of all-pairs shortest paths algorithms could potentially compute these distances in a single sweep rather than level by level.

In future work, we hope that this code and any subsequent improvements will enable new insights and optimizations for mapper constructions. For example, the interleaving distance could serve as a measure of parameter quality by quantifying how close a computed mapper graph is to a ground truth. 
It would also be valuable to benchmark classifier performance using the interleaving distance against that of existing graph distances. 
Once computation times are reduced, a larger-scale experiment on the MPEG-7 dataset, as outlined in \cref{ssec:ClassificationMPEG7}, would be especially compelling.

Finally, because the loss function underlying our formulation is fundamentally categorical, it has the potential to translate to interleaving distances for other topological descriptors, such as merge trees and multiparameter persistence modules; some theoretical steps in this direction have already begun \cite{Olave2026}. 
We conjecture that this ILP-based viewpoint may open the door to practical computation of interleaving distances in settings where prior work has been largely theoretical.

\bibliography{refs-mapper-loss}

\appendix
\appendix

\end{document}